\DeclareMathOperator{\ba}{\mathbf{a}}
\DeclareMathOperator{\bb}{\mathbf{b}}
\DeclareMathOperator{\bc}{\mathbf{c}}
\DeclareMathOperator{\e}{\mathbf{e}}
\DeclareMathOperator{\g}{\mathbf{g}}
\DeclareMathOperator{\h}{\mathbf{h}}
\DeclareMathOperator{\n}{\mathbf{n}}
\DeclareMathOperator{\p}{\mathbf{p}}
\DeclareMathOperator{\q}{\mathbf{q}}
\DeclareMathOperator{\s}{\mathbf{s}}
\DeclareMathOperator{\bu}{\mathbf{u}}
\DeclareMathOperator{\bv}{\mathbf{v}}
\DeclareMathOperator{\w}{\mathbf{w}}
\DeclareMathOperator{\x}{\mathbf{x}}
\DeclareMathOperator{\y}{\mathbf{y}}
\DeclareMathOperator{\A}{\mathbf{A}}
\DeclareMathOperator{\B}{\mathbf{B}}
\DeclareMathOperator{\C}{\mathbf{C}}
\DeclareMathOperator{\D}{\mathbf{D}}
\DeclareMathOperator{\F}{\mathbf{F}}
\DeclareMathOperator{\G}{\mathbf{G}}
\DeclareMathOperator{\bH}{\mathbf{H}}
\DeclareMathOperator{\I}{\mathbf{I}}
\DeclareMathOperator{\bL}{\mathbf{L}}
\DeclareMathOperator{\Q}{\mathbf{Q}}
\DeclareMathOperator{\R}{\mathbf{R}}
\DeclareMathOperator{\bS}{\mathbf{S}}
\DeclareMathOperator{\bbS}{\mathbb{S}}
\DeclareMathOperator{\T}{\mathbf{T}}
\DeclareMathOperator{\V}{\mathbf{V}}
\DeclareMathOperator{\bbV}{\mathbb{V}}
\DeclareMathOperator{\X}{\mathbf{X}}
\DeclareMathOperator{\Y}{\mathbf{Y}}
\DeclareMathOperator{\Z}{\mathbf{Z}}
\DeclareMathOperator{\1}{\mathbf{1}}
\DeclareMathOperator{\0}{\mathbf{0}}
\DeclareMathOperator{\tr}{tr}
\DeclareMathOperator{\bvec}{vec}
\DeclareMathOperator{\diag}{diag}
\DeclareMathOperator{\rank}{rank}
\DeclareMathOperator{\sinr}{SINR}
\DeclareMathOperator{\PB}{PB}
\DeclareMathOperator{\IN}{IN}
\DeclareMathOperator{\veta}{\boldsymbol \eta}
\DeclareMathOperator{\vGamma}{\boldsymbol \Gamma}
\newtheorem{Lemma}{Lemma}
\newtheorem{Theorem}{Theorem}
\newtheorem{Def}{Definition}
\newtheorem{Corollary}{Corollary}
\newtheorem{Prob}{Problem Statement}
\title{Instantaneous Relaying: Optimal Strategies and Interference Neutralization}
\author{Zuleita Ho and Eduard Jorswieck\\
Institut of Communication Technology,\\
Faculty of Electrical and Computer Engineering,\\
Technische Universit\"{a}t Dresden\\
\{zuleita.ho, eduard.jorswieck\}@tu-dresden.de \thanks{This work has been performed in the framework of the European research
project SAPHYRE, which is partly funded by the European Union under 
its FP7 ICT Objective 1.1 - The Network of the Future.} }
\begin{document}
 \maketitle
\begin{abstract}
 \textcolor{black}{In a multi-user wireless network equipped with multiple relay nodes, some relays are more intelligent than other relay nodes. The intelligent relays are able to gather channel state information, perform linear processing and forward signals whereas the dumb relays is only able to serve as amplifiers. As the dumb relays are oblivious to the source and destination nodes, the wireless network can be modeled as a relay network with \emph{smart instantaneous relay} only: the signals of source-destination arrive at the same time as source-relay-destination. Recently, instantaneous relaying is shown to improve the degrees-of-freedom of the network as compared to classical cut-set bound. In this paper, we study an achievable rate region and its boundary of the instantaneous interference relay channel in the scenario of (a) uninformed non-cooperative source-destination nodes (source and destination nodes are not aware of the existence of the relay and are non-cooperative) and (b) informed and cooperative source-destination nodes. Further, we examine the performance of interference neutralization: a relay strategy which is able to cancel interference signals at each destination node in the air. We observe that interference neutralization, although promise to achieve desired degrees-of-freedom, may not be feasible if relay has limited power.
Simulation results show that the optimal relay strategies improve the achievable rate region and provide better user-fairness in both uninformed non-cooperative and informed cooperative  scenarios\footnote{Part of this work is submitted to ISIT 2012 \cite{Ho2011b}.}.}
\end{abstract}

\begin{keywords}
 Interference relay channel; Interference neutralization; Non-potent relay; Full-duplex relay; Pareto Boundary; Semi-definite relaxation; Convex optimization; Amplify-and-forward relay; Instantaneous relay.  
\end{keywords}

\section{Introduction}
In a wireless network, where the source (S) nodes and destination (D) nodes communicate in a shared medium, e.g. time, frequency, code, the signals interfere with each other in the air and may not be possible to be separated at the destination nodes which then leads to a rate degradation. Interference management techniques are thereby crucial to the ever-increasing demand of quality of service. \textcolor{black}{Advanced interference management techniques, e.g., superposition coding and multi-user decoding, induce difficulty in practical implementation as \emph{all codebooks} and \emph{interference decoding capabilities} are required respectively. We assume in this work that the S and D nodes are not able to perform multi-user encoding and decoding techniques\footnote{For the results on instantaneous relay capacity with multi-user decoding destination nodes, please refer to \cite{Chang2011}.}. To improve the rate performance of the system, one can introduce relays to the system and therefore obtain an interference relay channel (IRC). The relay is responsible for signal boosting and interference managing. We assume that the relay employs an amplify-and-forward (AF) strategy which provides flexibility in implementation as the relay is oblivious to the modulation and coding schemes in the communication between S and D nodes \cite{Berger2009}.}

\textcolor{black}{
The novel notion of \emph{relay-without-delays}, also known as instantaneous relays, originally proposed in \cite{ElGamal2005}, refers to a type of relays that can forward signals consisting of both the current symbol and the symbols in the past, instead of only the past symbols as in 
conventional relays\footnote{In the terminology of half-duplex and full-duplex, instantaneous relay is a kind of full-duplex relays.}. The authors further point out that a relay-without-delays is only a conventional relay with one extra delay on the link from S to D. A visual example is given later in \cite{Lee2011} where in a network with multiple S-D pairs and conventional relay nodes, some \emph{naive} relays only AF signals and some \emph{smart} relays can gather channel state information (CSI) and forward linear processing of signals. As the naive relays do not affect the optimization of the system and are oblivious to S and D, the links from S to D through
the \emph{naive} relays are equivalent to direct links from S to D with one extra delay. Hence, the network is equivalent to an IRC equipped with smart relays-without-delays only. It is shown in \cite{ElGamal2007,Cadambe2009a} that instantaneous IRC can achieve degrees-of-freedom (DOF), or the multiplexing-gain, higher than the classical cut-set bound but can greatly simplify the interference alignment scheme \cite{Gomadam2011}. }

\textcolor{black}{
Interference neutralization (IN) is a technique of canceling, zero-forcing or \emph{neutralizing} interference signals by a careful selection of forwarding strategies when the signal travel through relay nodes before reaching the destination. This general idea has been applied in deterministic channels \cite{Mohajer2008, Mohajer2009a} and in two-hop relay channels \cite{Berger2005, Rankov2007, Berger2009}, also known as multi-user zero-forcing and orthogonalize-and-forward. To distinguish from IN, aligned interference neutralization (AIN), which is a combination of interference alignment and interference neutralization techniques \cite{Gou2011, Lee2011}, is first proposed in a conventional full-duplex two-S, two-relay and two-D network (the 2x2x2 channel) \cite{Gou2011} and is later extended to the IRC with an \emph{instantaneous relay} \cite{Lee2011}. Agreeing with \cite{ElGamal2007}, the achievable DOF with an instantaneous relay is higher than the cut-set bound, achieved with the utilization of AIN. For example, the DOF of a two-user IRC with one instantaneous relay is shown to be 1.5 as compared to 1 in a two-user interference channel (IC) without any relay.}

\textcolor{black}{
Although instantaneous relaying can increase the DOF of the underlying interference channel, the DOF analysis describes only the high SNR performance of the system, leaving the system behavior in low to medium SNR regime unmentioned. For any SNR regimes, the boundary of an achievable rate region, the so-called Pareto boundary, holds particular importance because it is highly desirable in the system's interest to operate on the Pareto boundary, i.e., Pareto efficient: no user can improve its own rate without decreasing other users rate \cite{Larsson2009,Zhang2010a, Ho2010a}. It is the purpose of this paper to characterize the Pareto boundary of the instantaneous IRC with (a) the optimal AF relay processing matrix and (b) interference neutralization. Our work differs from the literature in the following areas. We employ IN in an instantaneous relay channel
 with direct links between S and D, whereas \cite{Berger2005, Rankov2007, Berger2009} consider relays-with-delays and with no direct links between S and D.
To maintain simple design of S and D, we consider no symbol extensions and utilize a memory-less instantaneous relay and IN, whereas AIN (a combination of interference alignment and IN by using instantaneous relay with memory and symbol extensions) is employed in \cite{Gou2011, Lee2011}. Note that symbol extensions are shown to achieve desirable DOF. For details, please refer to \cite{Jafar2007a}. We assume linear relay processing which is considered in all aforementioned work. For non-linear instantaneous relaying, please refer to \cite{Khormuji2010}.}

However, the performance analysis of the IRC is not limited to IN. Other interference management techniques, such as interference alignment and interference forwarding, are shown to achieve promising results. In \cite{Nourani2010}, on a quasi-static channel, interference alignment is implemented on a $M$-user IRC with each node equipped with a single antenna. The power scaling of the relay is computed in order to guarantee a degrees-of-freedom $\frac{M}{2}$ in the channel. In \cite{Maric2008,Yu2009,Zhou2010}, instead of neutralizing interference signals at the destination nodes, the relay is responsible for \emph{forwarding} interference signals to the destination nodes so that the strength of the interference signal is increased and is able to be decoded and subtracted. However, a unified algorithm which is capable of adapting different relay strategies, such as interference alignment, IN and interference forwarding, depending on the channel qualities, is yet an open problem. For the reference of the readers, \textcolor{black}{we summarize the achievable DOF of IRC with instantaneous relays and relays-with-delays (conventional relays) in Table \ref{ta}}\footnote{The notations \# S-D stand for number of source and destination nodes in the system. $N_{s,d}$ is the number of antennas at source and destination nodes. $N_r$ is the number of antenna at the relay. \# R is the number of relay nodes in the system. DOF is the degrees-of-freedom achievable. \textcolor{black}{All relays are full-duplex. IR represents instantaneous relays and CR represents conventional relays.} Methods used to achieve the DOF can be aligned interference neutralization (AIN), interference alignment (IA). ME stands for Markov encoding and MUD stands for multi-user decoding whereas SUD stands for single-user decoding. A potent relay is a relay with unlimited power whereas a non-potent relay subjects to transmit power constraint. The notation (1,2) refers to one receive antenna and 2 transmit antennas at the relay.}. Note that the degrees-of-freedom is only a performance measure in the high SNR regime and is not suitable to be used as a performance metric in the low and medium SNR regime.

\begin{table}
\begin{center}
\textcolor{black}{
\caption{The summary of achievable DOF in different topologies of the interference relay channel. \label{ta}}
\begin{tabular}{|c|c|c|c|c|c|c|c|}
 \hline
 & \# S-D & $N_{s,d}$ & $N_{r}$ & \# R & DOF & IR/CR & Method \\
\hline
& 2 & M & M & 1 & $3M/4$ & IR & AIN \cite{Lee2011}, potent relay\\
\cline{2-8}
SUD S-D & 2 & 1 & 1 & 2 & 2 & CR & AIN \cite{Gou2011}, potent relay \\
\cline{2-8}
& K & 1 & K & 1 & $K$ & IR & IN, non-potent relay (this paper)\\
\cline{2-8}
&K & 1 & K & 1 & $K$ & CR & IN, ME, non-potent relay ($P_r \sim O(P_s^2)$) \cite{Tannious2008}\\
\hline
MUD S-D &2 & 1 & (1,2) & 1 & $2$ & IR & MUD, non-potent relay \cite{Chang2011}\\
\hline
\end{tabular}
}
\end{center}
\vspace*{-2em}
\end{table}

\textcolor{black}{
In this paper we study the performance, in terms of achievable rate regions and rate fairness, of a $K$ users instantaneous IRC. We assume a power constraint at the relay and \emph{linear processing capabilities} on the relay and S-D pairs. Global CSI is assumed at the relay which is a common assumption in aforementioned works. Although the effort of CSI estimation and processing at the relays may seem large, the resulted gain largely outweighs the cost in low mobility environments \cite{Berger2009}. A distributed smart relays system, which is capable of gathering CSI, matching the LTE standard, is recently demonstrated \cite{Ylmaz2010}. In Section \ref{sec:ch_model}, we describe the channel model. An achievable rate region and the Pareto boundary formulation for (a) AF strategy optimization and (b) AF and IN strategy are given in Section \ref{sec:pb_formulation}. In the same section, we address the necessary and sufficient condition of feasibility of IN in Theorem \ref{lem:fea}, in terms of channel conditions and the physical attributes of the relay (e.g. the number of antennas and the power at the relay).}

\textcolor{black}{
An improvement of rate performance is expected, as turning off the relay gives the same rate region as the single-input-single-output (SISO) IC. The interesting idea is to improve the rate performance by optimizing the relay strategy, \emph{even if the S-D nodes are uninformed of the relay's presence and have their strategies unaltered (See Section \ref{sec:opt_relay})}. 
The computation of the optimal relay processing matrix, optimal in the sense of Pareto optimality (see Section \ref{sec:pb_formulation} for details), is not trivial as it closely resembles an IC which is well known for the NP-hard complexity of the transmit optimization \cite{Liu2011}, in terms of the number of users in the system. However, we are able to compute an upper bound using the Semi-Definite-Relaxation (SDR) techniques which relaxes the optimization problem to a convex problem and can be solved efficiently. As we will see later in Section \ref{sec:opt_relay}, in the scenario with two S and two D nodes, this relaxation is shown to be tight. 
In Section \ref{sec:opt_power}, we investigate the scenario when the S-D nodes are aware of the relay's presence and are willing to cooperate by optimizing their transmit power.  The resulting rate region can be computed using SDR techniques and can be compared fairly to the achievable rate region of a SISO IC, without any relay, in which the S nodes optimize their transmit power. We show numerical evidence and discussion of the advantages of relay in a system of uninformed S-D nodes in Section \ref{sec:simulations}. Conclusions and future directions are given in Section \ref{sec:conclusion}. }

\subsection{Notations}
The set $\mathbb{C}^{a \times b}$ denotes a set of complex matrices of size $a$ by $b$ and is shortened to $\mathbb{C}^a$ when $a=b$. The notation $\mathcal{N}(\mathbf{A})$ is the null space of $\A$. The set $\mathbb{R}$ denotes all real numbers whereas $\mathbb{R}_+$ denotes the set of positive numbers. The operator $\mathbb{E}_a$ represents expectation over the random variable $a$. The operator $\otimes$ denotes the Kronecker product. The superscripts $^T$, $^H$, $^{-T}$, $^{-H},^\dagger$ represent transpose, Hermitian transpose, inverse and transpose, inverse and Hermitian transpose and Moore-Penrose inverse respectively whereas the superscript $^*$ denotes the conjugation operation. The scalar and vector Euclidean norms are written as $|.|$ and $\|.\|$. The trace of matrix $\mathbf{A}$ is denoted as $\tr(\mathbf{A})$. Vectorization stacks the columns of a matrix $\A$ to form a long column vector denoted as $\bvec(\mathbf{A})$. The identity and zero matrices of dimension $K$ are written as $\I_K$ and $\0_K$. The vector $\e_i$ represents a column vector with zero elements everywhere and one at the $i$-th position. The notation $[\A]_{ml}$ denotes the $m$-th row and $l$-th column element of the matrix $\A$. The notation $\p_{a:b}$, $0 \leq a\leq b\leq n$, denotes a vector which has elements $[p_a, p_{a+1},\ldots, p_b]$ where $\p=[p_1,\ldots,p_n]$.

%

\section{Channel Model}\label{sec:ch_model}

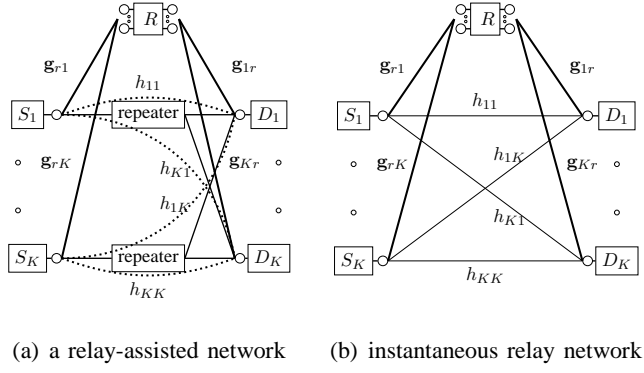
\begin{figure}
\begin{center}
\subfigure[a relay-assisted network]
{
\resizebox{!}{4cm}{
\begin{tikzpicture}
\node[rectangle,draw] (sk) at (0,0) {$S_K$};
\node[rectangle,draw] (s1) at (0,3) {$S_1$};
\node[circle,draw,inner sep=0, text width =0.1cm]  at (-0.2,2) {};
\node[circle,draw,inner sep=0, text width =0.1cm]  at (-0.2,1) {};

\node[circle,draw,inner sep=0, text width =0.2cm] (s1an) at (0.6,3) {};
\node[circle,draw,inner sep=0, text width =0.2cm] (skan) at (0.6,0) {};
\draw[thick] (s1) to (s1an);
\draw[thick] (sk) to (skan);

\node[rectangle,draw] (r2) at (2.5,0) {repeater};
\node[rectangle,draw] (r1) at (2.5,3) {repeater};

\draw[thick] (s1an.east) to (r1.west);
\draw[thick] (skan.east) to (r2.west);

\node[rectangle,draw] (dk) at (5,0) {$D_K$};
\node[rectangle,draw] (d1) at (5,3) {$D_1$};
\node[circle,draw,inner sep=0, text width =0.1cm]  at (5.2,2) {};
\node[circle,draw,inner sep=0, text width =0.1cm]  at (5.2,1) {};

\node[circle,draw,inner sep=0, text width =0.2cm] (d1an) at (4.4,3) {};
\node[circle,draw,inner sep=0, text width =0.2cm] (dkan) at (4.4,0) {};
\draw[thick] (d1) to (d1an);
\draw[thick] (dk) to (dkan);

\draw[thick] (r1.east) to (d1an.west);
\draw[thick] (r1.east) to (dkan.west);
\draw[thick] (r2.east) to (d1an.west);
\draw[thick] (r2.east) to (dkan.west);

\node[rectangle,draw, inner sep=0.2cm, text width =0.2cm] (r) at (2.5,5) {$R$};

\node[circle,draw,inner sep=0,text width =0.2cm] (ran1) at (2,5.2) {};
\node[circle,draw,inner sep=0,text width =0.2cm] (rank) at (2,4.8) {};
\draw[thick] (2.2, 5.2) -- (2.1, 5.2);
\draw[thick] (2.2, 4.8) -- (2.1, 4.8);
\node[circle,draw,inner sep=0,text width =0.05cm]  at (2.1, 5.05) {};
\node[circle,draw,inner sep=0,text width =0.05cm]  at (2.1, 4.95) {};
\node (r_an_l) at (2,5) {};

\node[circle,draw,inner sep=0,text width =0.2cm] (ran1) at (3,5.2) {};
\node[circle,draw,inner sep=0,text width =0.2cm] (rank) at (3,4.8) {};
\draw[thick] (2.8, 5.2) -- (2.9, 5.2);
\draw[thick] (2.8, 4.8) -- (2.9, 4.8);
\node[circle,draw,inner sep=0,text width =0.05cm]  at (2.9, 5.05) {};
\node[circle,draw,inner sep=0,text width =0.05cm]  at (2.9, 4.95) {};
\node (r_an_r) at (3,5) {};

\draw[very thick] (s1an.east) to (r_an_l.west);
\draw[very thick] (skan.east) to (r_an_l.west);
\draw[very thick] (r_an_r.east) to (d1an.west);
\draw[very thick] (r_an_r.east) to (dkan.west);

\node[very thick] at (4.5,2) {$\mathbf{g}_{Kr}$};
\node[very thick] at (0.6,2) {$\mathbf{g}_{rK}$};
\node[very thick] at (4.5,4) {$\mathbf{g}_{1r}$};
\node[very thick] at (0.6,4) {$\mathbf{g}_{r1}$};

\path[very thick, dotted] (s1an.east) edge[bend left=20] node[anchor=south, above] {$h_{11}$} (d1an.west);
\path[very thick, dotted] (s1an.east) edge[bend left=40] node[anchor=north, below] {$h_{K1}$} (dkan.west);
\path[very thick, dotted] (skan.east) edge[bend right=20] node[anchor=north, below] {$h_{KK}$} (dkan.west);
\path[very thick, dotted] (skan.east) edge[bend right=40] node[anchor=south, above] {$h_{1K}$} (d1an.west);

\end{tikzpicture}
}
}
%
%
\subfigure[instantaneous relay network]
{
\resizebox{!}{4cm}{
\begin{tikzpicture}

\node[rectangle,draw] (x1) at (-0.2,3) {$S_1$};
\node[circle,draw,inner sep=0, text width =0.1cm]  at (-0.2,2) {};
\node[circle,draw,inner sep=0, text width =0.1cm]  at (-0.2,1) {};
\node[rectangle,draw] (x2) at (-0.2,0) {$S_K$};

\node[circle,draw,inner sep=0, text width =0.2cm] (x1an) at (0.4,3) {};
\node[circle,draw,inner sep=0, text width =0.2cm] (x2an) at (0.4,0) {};
\draw[thick] (x1) to (x1an);
\draw[thick] (x2) to (x2an);

\node[rectangle,draw] (y1) at (5.2,3) {$D_1$};
\node[rectangle,draw] (y2) at (5.2,0) {$D_K$};
\node[circle,draw,inner sep=0, text width =0.1cm]  at (5.2,2) {};
\node[circle,draw,inner sep=0, text width =0.1cm]  at (5.2,1) {};

\node[circle,draw,inner sep=0, text width =0.2cm] (y1an) at (4.6,3) {};
\node[circle,draw,inner sep=0, text width =0.2cm] (y2an) at (4.6,0) {};
\draw[thick] (y1) to (y1an);
\draw[thick] (y2) to (y2an);

\draw (x1an.east) to node [anchor= south, above] {$h_{11}$} (y1an.west);
\draw (x1an.east) to  (y2an.west);
\draw (x2an.east) to node [anchor= north, below] {$h_{KK}$} (y2an.west);
\draw (x2an.east) to  (y1an.west);

\node at (3,2.2) {$h_{1K}$};
\node at (3,0.9) {$h_{K1}$};

\node[rectangle,draw, inner sep=0.2cm, text width =0.2cm] (r) at (2.5,5) {$R$};

\node[circle,draw,inner sep=0,text width =0.2cm] (ran1) at (2,5.2) {};
\node[circle,draw,inner sep=0,text width =0.2cm] (rank) at (2,4.8) {};
\draw[thick] (2.2, 5.2) -- (2.1, 5.2);
\draw[thick] (2.2, 4.8) -- (2.1, 4.8);
\node[circle,draw,inner sep=0,text width =0.05cm]  at (2.1, 5.05) {};
\node[circle,draw,inner sep=0,text width =0.05cm]  at (2.1, 4.95) {};
\node (r_an_l) at (2,5) {};

\node[circle,draw,inner sep=0,text width =0.2cm] (ran1) at (3,5.2) {};
\node[circle,draw,inner sep=0,text width =0.2cm] (rank) at (3,4.8) {};
\draw[thick] (2.8, 5.2) -- (2.9, 5.2);
\draw[thick] (2.8, 4.8) -- (2.9, 4.8);
\node[circle,draw,inner sep=0,text width =0.05cm]  at (2.9, 5.05) {};
\node[circle,draw,inner sep=0,text width =0.05cm]  at (2.9, 4.95) {};
\node (r_an_r) at (3,5) {};

\draw[very thick] (x1an.east) to (r_an_l.west);
\draw[very thick] (x2an.east) to (r_an_l.west);
\draw[very thick] (r_an_r.east) to (y1an.west);
\draw[very thick] (r_an_r.east) to (y2an.west);

\node[very thick] at (4.5,2) {$\mathbf{g}_{Kr}$};
\node[very thick] at (0.6,2) {$\mathbf{g}_{rK}$};
\node[very thick] at (4.5,4) {$\mathbf{g}_{1r}$};
\node[very thick] at (0.6,4) {$\mathbf{g}_{r1}$};

\node[circle, inner sep=0, text width=0.2cm] at (0,-0.8) {};
\end{tikzpicture}
}
}
\caption{The wireless relay-assisted network with layer one repeaters and one smart relay is shown in subfigure (a). The dotted lines demonstrate the equivalent links between source the destination taking into account of the presence of the repeaters. All paths from source to destination nodes take two time slots and links from source to relay and relay to destination take one time slot.  The equivalent channel is established in subbfigure (b) by replacing the relay as an instantaneous relay and information through instantaneous relay arrive at destinations the same time as the direct links. \label{fig:irc}}
\end{center}
\vspace{-1cm}
\end{figure}

In a SISO IRC, an example of two sources and two destinations shown in Fig. \ref{fig:irc}, we denote the sources as $S_i$ and destinations as $D_i$, $i=1,\ldots,K$. The multi-antenna relay node
is denoted as $R$. Denote the complex channel from $S_i$ to $D_j$ as $h_{ji}$ and the complex channel vector from $S_i$ to $R$ as $\g_{ri}$ and from $R$ to $D_j$ as $\g_{jr}$. All channels are assumed to be independent identically distributed complex Gaussian variates, $\g_{ri}, \g_{jr} \in \mathbb{C}^{M \times 1}$, where $M$ is the number of antennas at the relay. We assume a memoryless instantaneous relay \cite{ElGamal2005, Lee2011,Chang2011} which has a linear processing matrix $\R\in \mathbb{C}^{M \times M}$. The signal received at $R$ is:
\begin{equation}
 \mathbf{y}_r= \sum_{j=1}^K \g_{rj} x_j + \mathbf{n}_r
\end{equation}
where $x_i$  are the transmit symbols from $S_i$  which is assumed to be zero mean proper Gaussian variable and has \textcolor{black}{power constraint $P_s^{max}$}, $\mathbb{E} \left[|x_j|^2 \right]=P_j \leq P_s^{max}, j=1, \ldots, K$. The noise at the relay is denoted as $\n_r$ which is assumed to be independent identically distributed proper Gaussian variables with zero mean and unit variance. The assumption of circularity for the transmit symbols simply the derivation as the achievable rate is the Shannon rate. For the degrees-of-freedom and achievable rates improvement with improper Gaussian signaling, please refer to \cite{Cadambe2010, Ho2011a} respectively. The received signal at $D_j,j=1,\ldots,K$, is 
\begin{equation}\label{eqt:in_out}
 y_j= \sum_{l=1}^K \left( h_{jl} + \g_{jr}^H \R \g_{rl}\right) x_l + \g_{jr}^H \R \mathbf{n}_r + n_j
\end{equation}
For brevity, denote $\p=[P_1,\ldots, P_K]^T \in \mathbb{R}_+^{K \times 1}$. The Signal-to-Interference-and-Noise ratio at destination $j$ is 
\begin{equation}\label{eqt:sinr_no_in}
 \sinr_j(\R, \p)= \frac{|h_{jj} + \g_{jr}^H \R \g_{rj}|^2 P_j}{\sum_{l=1,l \neq j}^{K}|h_{jl} + \g_{jr}^H \R \g_{rl}|^2  P_l + \|\g_{jr}^H \R \|^2 + 1}
\end{equation}
where $\|\g_{jr}^H \R \|^2$ is the amplified noise from R to $D_j$. The power constraint at the relay is
\begin{equation}
 \mathbb{E}_{\mathbf{y}_r} \left[ \tr\left( \R \mathbf{y}_r \mathbf{y}_r^H \R^H\right) \right] \leq P_r^{max}.
\end{equation} \textcolor{black}{Note that $\mathbb{E}_{\n_r, x_j,j=1,\ldots,K} \left[ \mathbf{y}_r \mathbf{y}_r^H\right]= \sum_{j=1}^K \g_{rj} \g_{rj}^H P_j + \mathbf{I}$. The power constraint is therefore rewritten as the following:
\begin{equation}\label{eqt:pow_con1}
 \tr \left( \R \Q \R^H \right)\leq P_r^{max}
\end{equation} where 
\begin{equation}\label{eqt:Q}
 \Q=\sum_{j=1}^K \g_{rj} \g_{rj}^H P_j + \mathbf{I}.
\end{equation}
}

\subsection{Interference neutralization}

In order to neutralize interference, the following $K(K-1)$ equations have to be satisfied at the same time:
\begin{equation}\label{eqt:in_con}
         h_{ij} + \g_{ir}^H \R \g_{rj} =0, \hspace{1cm} i,j=1,\ldots, K, i \neq j.   
\end{equation}
 Let $\G_{dr}=[\g_{1r}, \g_{2r}, \ldots, \g_{Kr}]$ and $\G_{rt}=[\g_{r1}, \g_{r2}, \ldots, \g_{rK} ]$. Denote $s_l= \g_{lr}^H \R \g_{rl}$. We have the interference neutralization requirement,
\begin{equation}\label{eqt:vecin}
 \G_{dr}^H \R \G_{rt} = \bS
\end{equation}
 with 
\begin{equation}\label{eqt:S}
 \bS= \left[ \begin{array}{ccc}
              s_1 & \ldots & -h_{1K}\\
	      -h_{21} & s_2 & \ldots\\
	       & \ddots & \\
	      \ldots & -h_{K (K-1)} & s_K
             \end{array}
\right].
\end{equation}
\textcolor{black}{
Note that $\bS$ is a matrix with off-diagonal elements as the channel coefficients of the interference channel and diagonal elements as the optimization variables $\s$. As we sill show later, the optimization can be facilitated if the optimization variable is $\bS$ instead of $\s$. This is due to the linear relationship between $\R$ and $\bS$. However, we must stress that only the diagonal elements of $\bS$, $\s$, are free to be optimized as $\bS$ is constrained to the form in \eqref{eqt:S}. Eqt. \eqref{eqt:S} can be rewritten to a more comprehensive form. We introduce a row selection matrix $\T$ which select the off-diagonal elements of $\bS$ from the vector $\bvec(\bS)$. For example when $K=2$, we have $\T=[0,1,0,0; 0,0,1,0]$ and $\T \bvec(\bS)= [-h_{21}, -h_{12}]^T$. We have
\begin{equation}\label{eqt:S2}
 \T \bvec(\bS)= - \T \bvec(\bH).
\end{equation} 
}
If IN is feasible (see later in Theorem \ref{lem:fea} for feasibility issue), we can choose a relay matrix $\R$, which is a function of complex coefficients $\s$, that satisfies the IN constraint \eqref{eqt:vecin} and achieves the following SINR,
\textcolor{black}{
\begin{equation}\label{eqt:sinr_in}
 \sinr_j^{\IN}(\bS,P_j)= \frac{|h_{jj} + \g_{jr}^H \R(\bS) \g_{rj}|^2 P_j}{ \|\g_{jr}^H \R(\bS) \|^2 + 1}
\end{equation} In the following, we use the matrices $\R$ and $\R(\bS)$ interchangeably when we wish to emphasize the optimization parameter $\bS$. Please see the next lemma for the formulation of $\R(\bS)$.}
In order to make sure the requirements in  \eqref{eqt:vecin} are not over-determined, we find the minimum number of antennas required in the relay such that  \eqref{eqt:vecin} is feasible.
\textcolor{black}{
\begin{Lemma}\label{lem:Rvec}
A sufficient condition of IN in \eqref{eqt:vecin} on the minimum number of antennas at the relay is $M \geq K$. For some target signal coefficients  $s_1, \ldots, s_K \in \mathbb{C}$, the relay processing matrix $\R$ that satisfies the interference neutralization  requirement \eqref{eqt:vecin} is determined by
\begin{equation}
 \bvec(\R)= \left( \G_{rt}^{T} \otimes \G_{dr}^{H} \right)^{\dagger} \bvec(\bS).
\end{equation}
\end{Lemma}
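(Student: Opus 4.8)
The plan is to vectorize the interference-neutralization equation \eqref{eqt:vecin} and recognize it as a linear system in the entries of $\R$, then invoke a standard rank argument to extract the antenna requirement and the pseudo-inverse formula. First I would apply the identity $\bvec(\A \X \B) = (\B^T \otimes \A)\,\bvec(\X)$ to $\G_{dr}^H \R \G_{rt} = \bS$, which immediately yields $\left(\G_{rt}^T \otimes \G_{dr}^H\right)\bvec(\R) = \bvec(\bS)$. This is a linear system with a $K^2 \times M^2$ coefficient matrix and unknown vector $\bvec(\R) \in \mathbb{C}^{M^2 \times 1}$; the system is solvable (for the given right-hand side, or indeed for every right-hand side) precisely when the coefficient matrix has full row rank $K^2$.

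Next I would establish that $M \geq K$ suffices for full row rank. Since $\G_{rt}, \G_{dr} \in \mathbb{C}^{M \times K}$ are formed from i.i.d.\ Gaussian channel vectors, both have rank $K$ almost surely when $M \geq K$. Using the fact that $\rank(\A \otimes \B) = \rank(\A)\rank(\B)$, we get $\rank\!\left(\G_{rt}^T \otimes \G_{dr}^H\right) = K \cdot K = K^2$ almost surely, so the coefficient matrix has full row rank $K^2$ and the system is consistent for any prescribed $\bS$ (in particular for the structured $\bS$ of \eqref{eqt:S} with any choice of diagonal $\s$). Among all solutions, the minimum-norm solution is given by the Moore–Penrose pseudo-inverse, $\bvec(\R) = \left(\G_{rt}^T \otimes \G_{dr}^H\right)^{\dagger}\bvec(\bS)$, which is exactly the claimed formula; one should also note that $\left(\G_{rt}^T \otimes \G_{dr}^H\right)^{\dagger} = (\G_{rt}^T)^{\dagger} \otimes (\G_{dr}^H)^{\dagger}$ when both factors have full column rank, which gives an equivalent, more explicit expression and confirms $\R$ solves \eqref{eqt:vecin} by direct substitution.

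The main obstacle — really the only subtlety — is the almost-sure full-rank claim and the counting of equations versus unknowns. One must check that $M^2 \geq K^2$ (automatic from $M \geq K$) is not merely necessary but that genericity of the Gaussian channels actually delivers rank $K^2$; this is where the Kronecker rank identity does the work, reducing a statement about a large structured matrix to the trivial observation that a generic $M \times K$ matrix with $M \geq K$ has rank $K$. I would also remark briefly that $M \geq K$ is stated only as a sufficient condition: the $K$ diagonal entries $\s$ are free, so the effective number of hard constraints is $K(K-1)$, and a sharper count could in principle lower the threshold, but the clean sufficient bound $M \geq K$ is what is used in the sequel (and matches the DOF table).
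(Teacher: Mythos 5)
Your proposal is correct and follows essentially the same route as the paper's proof: vectorize \eqref{eqt:vecin} via the identity $\bvec(\A\B\C)=(\C^T\otimes\A)\bvec(\B)$ to get $\left(\G_{rt}^T\otimes\G_{dr}^H\right)\bvec(\R)=\bvec(\bS)$, and solve with the Moore--Penrose inverse. You additionally justify consistency by the almost-sure full row rank of the Kronecker coefficient matrix (via $\rank(\A\otimes\B)=\rank(\A)\rank(\B)$ and genericity of the Gaussian channels), whereas the paper only counts equations versus unknowns and assumes full rank later; this is a strengthening of the same argument rather than a different approach.
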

\begin{proof}
From \eqref{eqt:vecin}, using $\bvec(\A\B\C)= (\C^T \otimes \A)\bvec(\A)$ \cite{Lutkepohl1996}, we obtain $\left( \G_{rt}^{T} \otimes \G_{dr}^{H} \right) \bvec(\R)= \bvec(\bS)$. For a consistent system \cite[P.12, Section 0.4.2]{Horn1985}, a sufficient condition is to have the number of equations (length of $\bvec(\bS)$) less than or equal to the number of variables (length of $\bvec(\R)$) $M\geq K$. Multiply both sides with the Moore-Penrose inverse of $\left( \G_{rt}^{T} \otimes \G_{dr}^{H} \right)$ and result follows.
\end{proof}
Note that when $M < K$, the system in \eqref{eqt:vecin} has more equations than unknowns and the consistency of the system depends heavily on the particular channel realizations.
For simplicity of presentation, from now on, we set $M=K$. We assume that the matrix $\left( \G_{rt}^{T} \otimes \G_{dr}^{H} \right)$ is full rank and obtain \cite[P.35]{Lutkepohl1996}
\begin{equation} \label{eqt:relay_Rs}
 \bvec(\R)= \left( \G_{rt}^{T} \otimes \G_{dr}^{H} \right)^{-1} \bvec(\bS).
\end{equation}
} If there is no power constraint at the relay, given any target signal coefficients $s_i$, we can construct a relay matrix $\R$ as in \eqref{eqt:relay_Rs} such that the $\IN$ requirement is satisfied. Otherwise, any performance metrics, e.g. sum rate, subject to the IN and power constraints, are optimized in a dimension of $K$ by optimizing the complex coefficients $\s=[s_1,\ldots, s_K]^T$.

\section{The Pareto Boundary optimization problem}\label{sec:pb_formulation}
\textcolor{black}{The achievable rate region of an instantaneous AF relay is defined to be the set of rate tuple achieved by all possible relay processing matrix $\R$ satisfying the power constraint \eqref{eqt:pow_con1}:
\begin{equation}\label{eqt:ach_rate_reg}
 \mathcal{R}= \bigcup_{\R:\tr\left( \R \Q \R^H\right) \leq P_r} \left( C(\sinr_1(\R)), \ldots, C(\sinr_K(\R))\right)
\end{equation}
where $C(x)=\log_2(1+x)$. Similarly, we define the achievable rate region of an instantaneous AF relay with IN to be the set of rate tuple achieved by all possible relay processing matrix $\R$ satisfying the IN constraint \eqref{eqt:relay_Rs} and the power constraint \eqref{eqt:pow_con1}:
\begin{equation}\label{eqt:ach_rate_reg_in}
 \mathcal{R}^{\IN}= \bigcup_{\substack{\R:\tr\left( \R \Q \R^H\right) \leq P_r,\\ 
\bvec(\R)= \left( \G_{rt}^{T} \otimes \G_{dr}^{H} \right)^{-1} \bvec(\bS),\\
 \T \bvec(\bS)= - \T \bvec(\bH)}} 
\left( C(\sinr_1(\R)), \ldots, C(\sinr_K(\R))\right).
\end{equation}
Note that the IN requirement gives a smaller feasible set and thus $\mathcal{R}^{IN} \subset \mathcal{R}$. However, the motivation of the study of $\mathcal{R}^{\IN}$ is two-fold: (a) intuitively when both transmit power at S and R is very large, the optimal relay strategy is to neutralize interference, so as to transmit at the maximum DOF; but the performance of other SNR regimes is yet to be studied. (b) the characterization of $\mathcal{R}^{\IN}$ in the instantaneous IRC with direct link between S-D pairs is still an open problem. The outer boundary of $\mathcal{R}$ $(\mathcal{R}^{\IN})$ -- the Pareto boundary (PB) of $\mathcal{R}$ $(\mathcal{R}^{\IN})$ -- is
a set of operating points at which one user cannot increase its own rate without simultaneously decreasing other users rate.
\begin{Def}[\cite{Jorswieck2008,Ho2010a}]
 A rate-tuple $(r_1,\ldots, r_K)$ is Pareto optimal if there is no other rate tuple $(q_1,\ldots, q_K)$ such that $(q_1,\ldots,q_K)\geq(r_1,\ldots, r_K)$ and 
$(q_1,\ldots,q_K)\neq(r_1,\ldots, r_K)$\footnote{The inequality is component-wise.}. 
\end{Def}
By definition, the operation points on the PB can be evaluated by maximizing one user's rate while keeping other users' rates constant. Other optimization techniques for PB evaluation
has been proposed \cite{Charafeddine2007,Zhang2010a}. Here, the PB problem is formulated as the maximization of $\sinr_1$ subject to the constraints on $\sinr_j \geq \gamma_j$ for some pre-determined target SINR values $\gamma_j$, $j=2,\ldots,K$.}
\textcolor{black}{
\begin{Prob}\label{prob_pb}
The Pareto boundary of $\mathcal{R}$ \eqref{eqt:ach_rate_reg} is a set of rate tuple $\left( C(\gamma_1^{\#}), C(\gamma_2), \ldots, C(\gamma_K) \right)$ where $\gamma_1^{\#}$ is the optimal objective value and $\gamma_j, j=2,\ldots K$ are the constraints of the following optimization problem.
\begin{equation}\label{def:PB_no_in}
 (\PB)\left\{
 \begin{aligned}
  \max_{\R \in \mathbb{C}^{M}, \p\in \mathbb{R}_+^{K \times 1}} \hspace{0.3cm} & \sinr_1(\R,\p)\\
  \mbox{s.t. } \hspace{0.3 cm}& \sinr_j(\R,\p) \geq \gamma_j, \hspace{0.3cm} j=2,\ldots, K,\\
  \hspace{0.3cm} & \tr \left( \R \Q\R^H \right)\leq P_r^{max}.
 \end{aligned} \right.
\end{equation}
\end{Prob}
}
\textcolor{black}{
Similarly, we formulate the PB optimization problem with IN in the following. To this end, we combine the first two constraints in the feasibility set of \eqref{eqt:ach_rate_reg_in}.
 Using the fact that $\tr \left( \A\B\C\D \right)= \bvec(\D^T)^T \left( \C^T \otimes \A\right) \bvec(\B)$ \cite{Lutkepohl1996}, we have
\begin{equation}
 \begin{aligned}
  \tr (\R \Q \R^H) &= \bvec(\R)^H \left( \Q^T \otimes \I\right) \bvec(\R)\\
& \overset{(a)}{=} \bvec(\bS)^H \underbrace{\left( \left( \diag(\p) + \G_{rt}^{-*} \G_{rt}^{-T} \right) \otimes \G_{dr}^{-1} \G_{dr}^{-H} \right)}_{\tilde{\Q}} \bvec(\bS)
 \end{aligned}
\end{equation} 
where (a) is due to \eqref{eqt:Q}, \eqref{eqt:relay_Rs} and $\G_{rt}^{-*}\Q^T \G_{rt}^{-T}= \diag(\p) + \G_{rt}^{-*} \G_{rt}^{-T}$. 
}
%
\textcolor{black}{
\begin{Prob} 
The Pareto boundary of $\mathcal{R}^{\IN}$ \eqref{eqt:ach_rate_reg_in} is a set of rate tuple $\left( C(\gamma_1^{\#}), C(\gamma_2), \ldots, C(\gamma_K) \right)$ where $\gamma_1^{\#}$ is the optimal objective value and $\gamma_j, j=2,\ldots K$ are the constraints of the following optimization problem.
\begin{tikzpicture}
  \node at (-0.5,-0.1) {$(\PB-\IN)$};
  \draw[decorate,decoration={brace}] (1,-2) -- (1,1.5);
  \node at (6.0,0) {
    \begin{minipage}{0.7\linewidth}
      \begin{subequations}\label{def:PB_in}
	\begin{align}
	\max_{\bS \in \mathbb{C}^{K}, \p\in \mathbb{R}_+^{K \times 1}} \hspace{0.3cm} & \sinr_1^{\IN}(\bS,P_1)\\
	\mbox{s.t. }\hspace{0.3cm} & \sinr_j^{\IN}(\bS,P_j) \geq \gamma_j, \hspace{0.3cm} j=2,\ldots, K,\\
	  & \bvec(\bS)^H \tilde{\Q} \bvec(\bS) \leq P_r^{max}, \label{eqt:def_PB_in_pow}\\
	&  \T \bvec(\bS)= - \T \bvec(\bH). \label{eqt:def_PB_in_in}
	\end{align} 
      \end{subequations}              
    \end{minipage}
  };
\end{tikzpicture}
\end{Prob}
} 
%
Before we show the optimization methods of the aforementioned problems, the feasibility issue of \eqref{def:PB_in} needs to be addressed.
\textcolor{black}{
\begin{Theorem}\label{lem:fea}
A necessary and sufficient condition for the feasibility of interference neutralization - satisfying \eqref{eqt:def_PB_in_pow} and \eqref{eqt:def_PB_in_in} simultaneously - is
\begin{equation}\label{eqt:fea_cond}
  \left(\T \bvec(\bH) \right)^H \left( \T \tilde{\Q}^{-1} \T^H \right)^{-1}  \T \bvec(\bH) \leq P_r^{max},
\end{equation}  where $\tilde{\Q}= \left(\diag(\p)+ \G_{rt}^{-*} \G_{rt}^{-T} \right) \otimes \left(\G_{dr}^{-1} \G_{dr}^{-H} \right)$.
A feasible solution is
\begin{equation}\label{eqt:fea_sol}
 \bvec(\bS)=\F \left(\x_n + (\T \F)^\dagger \T \bvec(\bH)\right)
\end{equation} where $\x_n \in \mathcal{N}(\T \F)$ and $\tilde{\Q}^{-1}=\F\F^H$. 
\end{Theorem}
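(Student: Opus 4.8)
The plan is to read the feasibility question as a constrained least-squares problem: by \eqref{eqt:def_PB_in_pow}--\eqref{eqt:def_PB_in_in}, interference neutralization is feasible if and only if the affine set $\{\bvec(\bS):\T\bvec(\bS)=-\T\bvec(\bH)\}$ meets the ellipsoid $\{\bvec(\bS):\bvec(\bS)^H\tilde{\Q}\bvec(\bS)\le P_r^{max}\}$, i.e. if and only if the minimum relay power $\min\{\bvec(\bS)^H\tilde{\Q}\bvec(\bS):\T\bvec(\bS)=-\T\bvec(\bH)\}$ does not exceed $P_r^{max}$. So the core of the argument is to evaluate this constrained minimum in closed form.

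First I would record that $\tilde{\Q}\succ 0$. Indeed $\tilde{\Q}=(\diag(\p)+\G_{rt}^{-*}\G_{rt}^{-T})\otimes(\G_{dr}^{-1}\G_{dr}^{-H})$, where $\G_{rt}^{-*}\G_{rt}^{-T}$ and $\G_{dr}^{-1}\G_{dr}^{-H}$ are Hermitian positive definite and $\diag(\p)\succeq 0$; hence each Kronecker factor is positive definite and so is their product. Thus $\tilde{\Q}^{-1}$ exists and factors as $\tilde{\Q}^{-1}=\F\F^H$ with $\F$ square and invertible (e.g. the Hermitian square root $\tilde{\Q}^{-1/2}$). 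Substituting $\bvec(\bS)=\F\z$ makes the power term equal $\z^H(\F^H\tilde{\Q}\F)\z=\|\z\|^2$, since $\F^H\tilde{\Q}\F=\I$, and turns the linear constraint into $\T\F\z=-\T\bvec(\bH)$.

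Next I would note that $\T\F$ has full row rank: the selection matrix $\T$ picks the $K(K-1)$ off-diagonal entries of $\bvec(\bS)$ and is therefore surjective, and composing with the invertible $\F$ keeps it surjective onto $\mathbb{C}^{K(K-1)}$. Consequently $\T\F\z=-\T\bvec(\bH)$ is consistent for every right-hand side; its general solution is $\z=-(\T\F)^{\dagger}\T\bvec(\bH)+\x_n$ with $\x_n\in\mathcal{N}(\T\F)$, and the minimum-norm member $\z^\star=-(\T\F)^{\dagger}\T\bvec(\bH)$ is orthogonal to the null space, so $\|\z\|^2=\|\z^\star\|^2+\|\x_n\|^2$ and the constrained power minimum equals $\|\z^\star\|^2$. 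Using $(\T\F)^{\dagger}=(\T\F)^H\big((\T\F)(\T\F)^H\big)^{-1}$ one gets $\big((\T\F)^{\dagger}\big)^H(\T\F)^{\dagger}=\big((\T\F)(\T\F)^H\big)^{-1}=(\T\F\F^H\T^H)^{-1}=(\T\tilde{\Q}^{-1}\T^H)^{-1}$, hence
\[
\|\z^\star\|^2=\big(\T\bvec(\bH)\big)^H\big(\T\tilde{\Q}^{-1}\T^H\big)^{-1}\T\bvec(\bH).
\]
The same rank argument shows $\T\tilde{\Q}^{-1}\T^H=(\T\tilde{\Q}^{-1/2})(\T\tilde{\Q}^{-1/2})^H$ is invertible, so the right-hand side is well defined. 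Feasibility therefore holds iff this quantity is $\le P_r^{max}$, which is \eqref{eqt:fea_cond}; pulling $\z^\star+\x_n$ back through $\bvec(\bS)=\F\z$ yields the explicit feasible point \eqref{eqt:fea_sol}, the sign of the particular term being fixed so as to satisfy \eqref{eqt:def_PB_in_in}.

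The optimization content is a textbook quadratic-over-affine minimization, so I expect the only place needing care is the complex linear algebra: verifying $\tilde{\Q}\succ 0$ and the full-row-rank of $\T\F$ (so that $(\T\F)^{\dagger}$ is a genuine right inverse and the Gram matrices invert), and keeping the Hermitian transposes and the sign of the particular solution straight. I do not anticipate a substantive difficulty beyond this bookkeeping.
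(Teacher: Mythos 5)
Your proof is correct and follows the same core construction as the paper: you whiten with a factor $\F$ satisfying $\tilde{\Q}^{-1}=\F\F^H$ (the paper takes $\F=\U^H\vGamma^{-1/2}$ from the eigendecomposition of $\tilde{\Q}$, which is the same factorization), the power becomes a Euclidean norm, and the solution set of $\T\F\z=-\T\bvec(\bH)$ is split into the pseudo-inverse particular solution plus $\mathcal{N}(\T\F)$, with the left-hand side of \eqref{eqt:fea_cond} identified as the squared norm of the particular part via $\T\F\F^H\T^H=\T\tilde{\Q}^{-1}\T^H$. Where you genuinely differ is the necessity direction: you invoke the minimum-norm property of the pseudo-inverse solution (orthogonality to the null space and Pythagoras), so the constrained minimum of $\bvec(\bS)^H\tilde{\Q}\bvec(\bS)$ over the affine IN set is obtained in closed form and feasibility is, in one stroke, equivalent to that minimum being at most $P_r^{max}$. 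The paper instead proves necessity by contradiction and, in doing so, asserts that a feasible $\x$ violating \eqref{eqt:fea_cond} must have the form $-a(\T\F)^H\left(\T\F(\T\F)^H\right)^{-1}\T\bvec(\bH)$ with $a$ real positive merely because a certain inner product is real; your orthogonal-decomposition argument bypasses that delicate step (a feasible $\x$ may also carry a null-space component) and is both shorter and more robust. Your explicit checks that $\tilde{\Q}\succ 0$ and that $\T\F$ has full row rank supply the invertibility of $\T\tilde{\Q}^{-1}\T^H$ that the paper uses implicitly, and your remark on the sign of the particular term is consistent with the paper's appendix, where the particular solution is $\x_h=-(\T\F)^{\dagger}\T\bvec(\bH)$.
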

}
\begin{proof}
See Appendix \ref{app:fea}.
\end{proof}
Note that the left hand side of the condition \eqref{eqt:fea_cond} is the power of the relay matrix which only performs IN (with $\x_n=\0$ in \eqref{eqt:fea_sol}) and only depends on channel coefficients whereas the right hand side is the relay power constraint. The importance of Theorem \ref{lem:fea} lies in the practicality of IN feasibility verification. With the information of channel qualities $\mathbf{H},\G_{rt},\G_{dr}$, transmit power at S $\p$ and transmit power constraint at relay $P_r^{max}$, we can immediately check whether IN is feasible or not. Further, if IN is feasible, there is always one feasible solution in \eqref{eqt:fea_sol} by setting $\x_n= \0_{K \times 1}$; if there is excess power at relay, we can optimize $\x_n$, and in turn $\bvec(\bS)$, to improve system performance, which is the objective of the succeeding section.

\section{The optimal relay strategies with uninformed S-D nodes}\label{sec:opt_relay}
In this section, we analyze the PB problems taking into consideration that the S and D nodes are uninformed of the presence of the relay nodes and therefore do not optimize their transmit power values: \textcolor{black}{ $\p=\p_0$ for some pre-determined power values $\p_0$. If the S-D pairs are non-cooperative, then each S transmits at full power and thus $\p= \1 P_s^{max}$. } Given the transmit power values of the source nodes, we maximize the achievable rate by choosing the relax matrix. The $\PB$ problems are formulated into quadratically constrained quadratic programs (QCQP) and are then relaxed to semi-definite programs (SDP). The relaxed problems are convex and can be solved using efficient convex optimization tools such as CVX \cite{Grant2011}. We show that in some scenarios, the optimal solution of the relaxed problems  attains the optimality of the original problems. In other words, the convex optimization methods solve the original problem efficiently in such scenarios (Please refer to Corollary \ref{Cor:rank1_no_in} and Lemma \ref{thm:rank1} for details). \textcolor{black}{The procedures used to obtain the PB are summarized in Algorithm \ref{algo}.}

 \begin{algorithm}
\textcolor{black}{
\caption{The pseudo-code for the Pareto boundary optimization in \eqref{def:PB_no_in} and \eqref{def:PB_in} . \label{algo}}
\begin{algorithmic}[1]
  \For{$j=2 \to K$}
  \State Compute the single-user-point of user $j$: $\gamma_j^{max}= \max_{\R} \frac{|h_{jj} + \g_{jr}^H \R \g_{rj}|^2 P_j}{\| \g_{jr}^H \R\|^2 +1}$, in which the user $j$ is the only user in the system with no interference.
  \State For a predefined integer $N$, let $\bbV_j=\left\{ 0, \frac{\gamma_j^{max}}{N-1}, \frac{2\gamma_j^{max}}{N-1},\ldots, \gamma_j^{max} \right\}$.
  \EndFor
  \State Define a tuple $\s$ to be a vector of possible values of $\gamma_j$, $\s=[\gamma_2,\ldots,\gamma_K]$ and $\s \in \bbS=\bbV_2 \times \bbV_3 \times \cdots \times \bbV_K$.
  \For {each $\s \in \bbS$}  
  \State With input parameter $\s$, matrix $\bS$ is defined as in \eqref{eqt:S}.  Solve the optimization problems  \eqref{eqt:convex_no_in_relaxed} for general relay processing matrix optimization or \eqref{eqt:convex_pb} for relay processing matrix optimization with interference neutralization.
  \If {the optimization problem is feasible}
   \State the optimal value $\gamma_1^\dagger$ and $\s$ form a point on the Pareto boundary, which is a $K-1$ dimension hyper-surface, in the $K$ dimensional space.
  \Else 
    \State the values of $\s$ are unachievable subject to the constraints.  
  \EndIf
  \EndFor
\end{algorithmic}
}
 \end{algorithm}

\subsection{The Pareto boundary: general relay optimization}
By introducing an auxiliary variable, we formulate the PB optimization problem with general relay processing matrix as a QCQP. The optimization variable is of dimension $M^2+1$ as compared to $M^2$ number of elements in the relay matrix. Nevertheless, this provides a more structural formulation and amends the analysis as shown in the sequel.

\begin{Lemma}\label{lem:convex_no_in}
\textcolor{black}{
The Pareto boundary of an IRC with instantaneous AF relay \eqref{def:PB_no_in} is a rate tuple $(C(\gamma_1^{\#}), C(\gamma_2),\ldots, C(\gamma_K))$ in which $\gamma_1^{\#}= \frac{\bar{\bv}^H \X_{11} \bar{\bv}}{\bar{\bv}^H \X_{12} \bar{\bv}}$ and $\bar{\bv}$ is the optimal solution of the following optimization problem. The values $\gamma_j,j=2,\ldots, K$ contribute to the constraints of the optimization problem.}
\begin{equation}\label{eqt:convex_no_in}
\left\{
 \begin{aligned}
  \max_{\bv \in \mathbb{C}^{(M^2+1) \times 1}} \hspace{0.3cm} & \frac{\bv^H \X_{11} \bv}{\bv^H \X_{12} \bv}\\
\mbox{s.t.} \hspace{0.3cm} & \frac{\bv^H \X_{j1} \bv}{\bv^H \X_{j2} \bv} \geq \gamma_j, \hspace{0.3cm} j=2,\ldots,K,\\
& \bv^H \X_3 \bv \leq 0.
 \end{aligned} \right.
\end{equation}
The matrices, for $i=1,\ldots,K$, are given by
\begin{equation*}
 \begin{aligned}
  \X_{i1}&=\left[\begin{array}{cc}
              \left( \g_{ri} \otimes \g_{ir}^*\right)^*\left( \g_{ri} \otimes \g_{ir}^*\right)^T & \left( \g_{ri} \otimes \g_{ir}^*\right)^T h_{ii}\\
h_{ii}^*\left( \g_{ri} \otimes \g_{ir}^*\right)^T & |h_{ii}|^2
             \end{array} \right] P_i,\\
\X_{i2} &=\sum_{l\neq i}^{K} \left[\begin{array}{cc}
              \left( \g_{rl} \otimes \g_{ir}^*\right)^*\left( \g_{rl} \otimes \g_{ir}^*\right)^T & \left( \g_{rl} \otimes \g_{ir}^*\right)^T h_{il}\\
h_{il}^*\left( \g_{rl} \otimes \g_{ir}^*\right)^T & |h_{il}|^2
             \end{array} \right] P_l
 + \left[\begin{array}{cc}
                    \I_{M} \otimes (\g_{ir} \g_{ir}^H) & \0_{M^2 \times 1}\\
\0_{1\times M^2}& 1
                   \end{array} \right],\\
\X_{3} &= \left[\begin{array}{cc}
                 \Q^T \otimes \I & \0_{M^2 \times 1}\\
\0_{1\times M^2}& - P_r
                \end{array}
 \right].
 \end{aligned}
\end{equation*}
\end{Lemma}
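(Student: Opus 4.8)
The plan is to turn the fractional program \eqref{def:PB_no_in} into a homogeneous one by absorbing the additive channel terms $h_{jl}$ into the optimization vector via a scalar auxiliary variable. Since the S--D nodes are uninformed, the powers $\p$ are held at the fixed value $\p_0$, so the only genuine variable in \eqref{def:PB_no_in} is $\R$. I would stack $\bvec(\R)$ with an auxiliary scalar $t$ into $\bv=[\bvec(\R)^T,\,t]^T\in\mathbb{C}^{M^2+1}$; the key observation is that the effective channel coefficient $h_{jl}+\g_{jr}^H\R\g_{rl}$ is affine in $\bvec(\R)$, hence \emph{linear} in $\bv$ once we require $[\bv]_{M^2+1}=1$.

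First I would record the $\bvec$/Kronecker identities from \cite{Lutkepohl1996}. Applying $\bvec(\A\B\C)=(\C^T\otimes\A)\bvec(\B)$ to the scalar $\g_{jr}^H\R\g_{rl}$ (a $1\times1$ quantity equals its own $\bvec$) gives $\g_{jr}^H\R\g_{rl}=(\g_{rl}\otimes\g_{jr}^*)^T\bvec(\R)$; writing the row vector $\mathbf{u}_{jl}^H:=[\,(\g_{rl}\otimes\g_{jr}^*)^T,\;h_{jl}\,]$, we get $h_{jl}+\g_{jr}^H\R\g_{rl}=\mathbf{u}_{jl}^H\bv$ when $[\bv]_{M^2+1}=1$. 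Likewise $\|\g_{jr}^H\R\|^2=\bvec(\R)^H(\I_M\otimes\g_{jr}\g_{jr}^H)\bvec(\R)$ and $\tr(\R\Q\R^H)=\bvec(\R)^H(\Q^T\otimes\I)\bvec(\R)$. Substituting into \eqref{eqt:sinr_no_in} then yields $\sinr_j(\R,\p_0)=\bv^H\X_{j1}\bv/\bv^H\X_{j2}\bv$ with $\X_{j1}=P_j\,\mathbf{u}_{jj}\mathbf{u}_{jj}^H$ and $\X_{j2}=\sum_{l\neq j}P_l\,\mathbf{u}_{jl}\mathbf{u}_{jl}^H$ plus a block-diagonal term with blocks $\I_M\otimes\g_{jr}\g_{jr}^H$ and $1$ (the trailing $1$ being the unit-variance noise $n_j$ evaluated at $t=1$); expanding the rank-one products $\mathbf{u}_{jl}\mathbf{u}_{jl}^H$ into $2\times2$ blocks reproduces the matrices displayed in the statement. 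The constraints $\sinr_j\ge\gamma_j$ become the corresponding ratio constraints, and the relay power constraint \eqref{eqt:pow_con1} becomes $\bvec(\R)^H(\Q^T\otimes\I)\bvec(\R)\le P_r^{max}|t|^2$, i.e.\ $\bv^H\X_3\bv\le0$. Thus \eqref{def:PB_no_in} is precisely \eqref{eqt:convex_no_in} with the extra normalization $[\bv]_{M^2+1}=1$.

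It then remains to justify dropping that normalization, which is the one step requiring care. I would argue: (i) every ratio $\bv^H\X_{j1}\bv/\bv^H\X_{j2}\bv$ and the form $\bv^H\X_3\bv$ are homogeneous of degree two, so the feasible set and objective of \eqref{eqt:convex_no_in} are invariant under $\bv\mapsto\alpha\bv$, $\alpha\neq0$; and (ii) the leading $M^2\times M^2$ block $\Q^T\otimes\I$ of $\X_3$ is positive definite, since $\Q=\sum_j\g_{rj}\g_{rj}^HP_j+\I\succeq\I\succ\0$, so any $\bv\neq\0$ with $\bv^H\X_3\bv\le0$ must have $[\bv]_{M^2+1}\neq0$ (and then $\bv^H\X_{j2}\bv\ge|[\bv]_{M^2+1}|^2>0$, so the ratios are well defined). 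Hence every nonzero feasible point of \eqref{eqt:convex_no_in} can be rescaled to have last entry $1$, and its first $M^2$ coordinates then form $\bvec(\R)$ for an $\R$ feasible in \eqref{def:PB_no_in} with the same SINRs; conversely $[\bvec(\R)^T,1]^T$ is feasible in \eqref{eqt:convex_no_in} for every feasible $\R$. So the two problems have the same attainable objective values, and an optimizer $\bar\bv$ of \eqref{eqt:convex_no_in}, normalized so $[\bar\bv]_{M^2+1}=1$, is optimal for \eqref{def:PB_no_in}; reading off $\gamma_1^{\#}=\bar\bv^H\X_{11}\bar\bv/\bar\bv^H\X_{12}\bar\bv$ delivers the claimed rate tuple $(C(\gamma_1^{\#}),C(\gamma_2),\dots,C(\gamma_K))$.

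I expect the bulk of the effort to be the conjugate/transpose bookkeeping in the Kronecker rewriting (getting $\g_{jr}^H\R\g_{rl}=(\g_{rl}\otimes\g_{jr}^*)^T\bvec(\R)$ and the Hermitian $2\times2$ blocks oriented correctly), rather than anything deep; the only genuinely conceptual point is the homogenization, and there the positive definiteness of $\Q$ is exactly what rules out the degenerate case $[\bv]_{M^2+1}=0$ and makes the reduction clean.
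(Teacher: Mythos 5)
Your proposal is correct and follows essentially the same route as the paper's proof: vectorize via $\g_{jr}^H\R\g_{rl}=(\g_{rl}\otimes\g_{jr}^*)^T\bvec(\R)$ and $\|\g_{jr}^H\R\|^2=\bvec(\R)^H(\I_M\otimes\g_{jr}\g_{jr}^H)\bvec(\R)$, then homogenize with an auxiliary scalar $t$ (the paper's Charnes--Cooper substitution $\veta=t\,\bvec(\R)$, $\bv=[\veta^T,t]^T$) to obtain the quadratic-ratio form with the stated $\X_{i1},\X_{i2},\X_3$. Your explicit justification that the normalization $[\bv]_{M^2+1}=1$ can be dropped—using homogeneity and $\Q\succeq\I\succ\0$ to rule out $t=0$ on the feasible set—is a point the paper leaves implicit, and is a welcome tightening rather than a different method.
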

\begin{proof}
 See Appendix \ref{app:convex_no_in}.
\end{proof}

Define $\V=\bv\bv^H$, we obtain the following convex problem after removing the constraint $\rank(\V)=1$:
\begin{equation}\label{eqt:convex_no_in_relaxed}
\left\{
 \begin{aligned}
  \max_{\V \in \mathbb{C}^{(M^2+1)}, \V \succeq 0 } \hspace{0.3cm} & \tr\left( \X_{11} \V \right)\\
\mbox{s.t.} \hspace{0.3cm} & \tr\left(\X_{12} \V \right)=1,\\
& \tr \left( \left( \X_{j1} -  \gamma_j \X_{j2} \right) \V \right)\geq 0, \hspace{0.3cm} j=2,\ldots,K,\\
& \tr \left( \X_3 \V \right) \leq 0
 \end{aligned} \right.
\end{equation}
which is a semi-definite program (SDP) as matrices $\X_{i1},\X_{i2},\X_3$ are Hermitian and can be solved efficiently using SDP solvers.
\begin{Corollary}\label{Cor:rank1_no_in}
\textcolor{black}{ By \cite[Theorem 3.2]{Huang2010} the rank of the optimum solution of \eqref{eqt:convex_no_in_relaxed} is smaller than $\sqrt{K+1}$. 
In the scenario of $K=2$ S-D pairs, the rank of the optimal solution is one which means that the relaxation is tight and the obtained solution is the global optimal solution of \eqref{eqt:convex_no_in}. In occasions when the optimal solution returned by CVX is not rank-one, one can find a vector which satisfies the same constraint and objective values in \eqref{eqt:convex_no_in_relaxed} using the rank-one reduction procedures \cite[Theorem 2.3]{Ai2009}. Such vector is thus one of the global optimal solutions of \eqref{eqt:convex_no_in}. For more S-D pairs, the result of SDP in \eqref{eqt:convex_no_in_relaxed} is not rank-one. However, one can apply randomization approximation techniques \cite{Luo2010} to approximate the optimal solution of \eqref{eqt:convex_no_in}.}
\end{Corollary}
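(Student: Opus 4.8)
The plan is to read off the rank bound directly from the constraint count of the relaxed SDP \eqref{eqt:convex_no_in_relaxed} and then specialize to $K=2$. First I would count the constraints that are linear in $\V$: the normalization $\tr(\X_{12}\V)=1$ contributes one, the SINR constraints $\tr((\X_{j1}-\gamma_j\X_{j2})\V)\geq 0$ for $j=2,\ldots,K$ contribute $K-1$, and the power constraint $\tr(\X_3\V)\leq 0$ contributes one, for a total of $m=K+1$. Since \eqref{eqt:convex_no_in_relaxed} is a complex-valued linear SDP (all the $\X$ matrices are Hermitian and $\V\succeq 0$), \cite[Theorem 3.2]{Huang2010} guarantees an optimal solution $\V^\star$ whose rank $r$ satisfies $r^2\le m=K+1$, i.e.\ $r<\sqrt{K+1}$.

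Next I would specialize to $K=2$. There the bound reads $r\le\sqrt{3}<2$, so $r\in\{0,1\}$; because the equality constraint $\tr(\X_{12}\V)=1$ rules out $\V=\0$, we must have $r=1$. Writing the rank-one optimizer as $\V^\star=\bar{\bv}\bar{\bv}^H$ recovers a vector $\bar{\bv}$ that is feasible for the original (non-relaxed) problem \eqref{eqt:convex_no_in} and attains the same value $\tr(\X_{11}\V^\star)=\bar{\bv}^H\X_{11}\bar{\bv}$; since relaxation can only enlarge the feasible set, this certifies that the relaxation is tight and $\bar{\bv}$ is globally optimal for \eqref{eqt:convex_no_in}. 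When a numerical solver returns a higher-rank optimum (the SDP optimizer need not be unique), I would invoke the rank-one reduction of \cite[Theorem 2.3]{Ai2009}, which, from any optimal $\V^\star$, constructs a rank-one matrix satisfying exactly the same constraint and objective values; its generating vector is then one of the global optimizers of \eqref{eqt:convex_no_in}. For $K\ge 3$ the bound $r<\sqrt{K+1}$ no longer forces $r=1$, so tightness cannot be claimed, and I would merely note that the Gaussian randomization of \cite{Luo2010} yields a feasible rank-one approximation of the SDP optimum.

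The hard part will not be the arithmetic but verifying that \eqref{eqt:convex_no_in_relaxed} is genuinely in the homogeneous, separable complex-QCQP normal form to which \cite[Theorem 3.2]{Huang2010} and \cite[Theorem 2.3]{Ai2009} apply, and pinning down the recovery step: that a rank-one feasible $\V=\bv\bv^H$ of the relaxation maps back to a feasible $\bv$ of \eqref{eqt:convex_no_in} with a consistent normalization $\bv^H\X_{12}\bv=1$, so that the linearized objective $\tr(\X_{11}\V)$ coincides with the original Rayleigh-quotient objective $\bv^H\X_{11}\bv/\bv^H\X_{12}\bv$. Once this correspondence is fixed, both the rank statement and the tightness claim follow immediately from the cited decomposition theorems.
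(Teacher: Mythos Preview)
Your proposal is correct and mirrors the paper's own reasoning exactly: the corollary is stated in the paper as a direct consequence of \cite[Theorem 3.2]{Huang2010} with no separate proof, and the constraint count you perform ($1+(K-1)+1=K+1$) together with the specialization to $K=2$ is precisely the argument the paper has in mind (and spells out explicitly in the analogous proof of Lemma~\ref{thm:rank1}). There is nothing to add.
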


\subsection{The Pareto boundary with IN}
We manipulate the problem \eqref{def:PB_in} in the same fashion as in \eqref{eqt:convex_no_in}. The optimization problem \eqref{def:PB_in} has one more constraint (the IN constraint) and is thus optimized in a smaller dimension $K^2+1$ rather than $M^2+1$. \textcolor{black}{Note that the condition given in Theorem \ref{lem:fea} is a necessary and sufficient condition for the feasibility of IN (non-empty feasible set in \eqref{eqt:ach_rate_reg_in}). In other words, if the condition is not satisfied, then the optimization problem in \eqref{eqt:pb_pro1} is not feasible regardless of the target SINR values $\gamma_2, \ldots, \gamma_K$.}

\begin{Lemma}\label{lem:pb}
 \textcolor{black}{
The Pareto boundary of an IRC with instantaneous AF relay and IN \eqref{def:PB_in} is a rate tuple $(C(\gamma_1^{\#}), C(\gamma_2),\ldots, C(\gamma_K))$ in which $\gamma_1^{\#}= \bar{\y}^H \hat{\B}_{i1} \bar{\y}$ and $\bar{\y}$ is the optimal solution of the following optimization problem. The values $\gamma_j,j=2,\ldots, K$ contribute to the constraints of the optimization problem.}
\begin{equation}\label{eqt:pb_pro1}
\left\{ \begin{aligned}
 \max_{\y \in \mathbb{C}^{(K^2+1) \times 1} } \hspace{1cm} & \y^H \hat{\B}_{11} \y \\
\mbox{s.t.} \hspace{1cm} & \y^H \hat{\B}_{12} \y=1,\\
& \y^H \hat{\B}_j \y \geq 0, \hspace{1cm} j=2,\ldots, K,\\
& \y^H \hat{\D}_3 \y  \leq 0,\\
& \y^H \hat{\D}_4 \y =0. 
 \end{aligned} \right.
\end{equation} where
\begin{equation*}
 \begin{aligned}
  \hat{\B}_{11}&=  \left[\begin{array}{cc}
                       \bL^T \e_1 \e_1^T \bL & \bL^T \e_1 \h_{11}\\
			\h_{11}^* \e_1^T \bL & |h_{11}|^2
                      \end{array}
 \right] P_1,\\
\hat{\B}_{12}&= \left[\begin{array}{cc}
                       \G_{rt}^{-*} \G_{rt}^{-T} \otimes \e_1 \e_1^T 
  & \0_{K^2 \times 1}\\
			\0_{1 \times K^2} & 1
                      \end{array}
 \right],\\
\hat{\B}_{j}&=  \left[\begin{array}{cc}
                        \bL^T \e_j \e_j^T \bL - \gamma_j \left(\G_{rt}^{-*} \G_{rt}^{-T} \otimes \e_j \e_j^T \right)  & \bL^T \e_j \h_{jj}\\
			\h_{jj}^* \e_j^T \bL & |h_{jj}|^2-\gamma_j
                      \end{array}
 \right], \\
\hat{\D}_{3}&=  \left[\begin{array}{cc}
                       \tilde{\Q} & \0_{K^2 \times 1}\\
			\0_{1 \times K^2} & -P_r^{max}
                      \end{array}
 \right] ,\\
\hat{\D}_{4}&= \left[\begin{array}{cc}
                       \T^H \T & \T^H \T \bvec(\mathbf{H})\\
			\bvec(\mathbf{H})^H \T^H \T & \bvec(\mathbf{H}) \T^H\T \bvec(\mathbf{H})
                      \end{array}
 \right].
 \end{aligned}
\end{equation*}
\end{Lemma}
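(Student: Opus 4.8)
The plan is to convert Problem \eqref{def:PB_in} into the quadratic form \eqref{eqt:pb_pro1} by the same augmentation trick used in Lemma \ref{lem:convex_no_in}, except that here the native optimization variable is $\bvec(\bS)\in\mathbb{C}^{K^2}$ rather than $\bvec(\R)$. First I would stack $\y=[\bvec(\bS)^T,\,t]^T\in\mathbb{C}^{K^2+1}$ with an auxiliary scalar $t$, and observe that at optimum we may normalize so that $t=1$; the constraint $\y^H\hat{\B}_{12}\y=1$ plays exactly this normalizing role, since by \eqref{eqt:sinr_in} the denominator of $\sinr_1^{\IN}$ is $\|\g_{1r}^H\R(\bS)\|^2+1$, and I must show this denominator equals $\y^H\hat{\B}_{12}\y$ when $t=1$. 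This is where the matrix $\bL$ enters: from \eqref{eqt:relay_Rs}, $\bvec(\R)=(\G_{rt}^T\otimes\G_{dr}^H)^{-1}\bvec(\bS)$, and I would define $\bL$ so that $\g_{jr}^H\R(\bS)\g_{rj}=\e_j^T\bL\,\bvec(\bS)$ and $\|\g_{jr}^H\R(\bS)\|^2 = \bvec(\bS)^H\big(\G_{rt}^{-*}\G_{rt}^{-T}\otimes\e_j\e_j^T\big)\bvec(\bS)$ — the second identity coming from $\g_{jr}^H\R=\e_j^T\G_{dr}^H\R$ together with the Kronecker vectorization rule $\bvec(\A\B\C)=(\C^T\otimes\A)\bvec(\B)$ and careful bookkeeping of $\G_{dr}^{-H}$. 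Matching the resulting quadratic forms against \eqref{eqt:sinr_in} term by term then yields $\hat{\B}_{11}$ (numerator of $\sinr_1^{\IN}$), $\hat{\B}_{12}$ (denominator of $\sinr_1^{\IN}$), and $\hat{\B}_j$ (the rearranged constraint $\sinr_j^{\IN}\geq\gamma_j$, i.e. numerator minus $\gamma_j$ times denominator $\geq 0$).

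Next I would handle the two remaining constraints. The relay power constraint \eqref{eqt:def_PB_in_pow} is already in the form $\bvec(\bS)^H\tilde{\Q}\,\bvec(\bS)\leq P_r^{max}$ from the displayed computation preceding Problem 3, so in the augmented variable it reads $\y^H\hat{\D}_3\y\leq 0$ with $\hat{\D}_3=\diag(\tilde{\Q},-P_r^{max})$ — exactly the stated block matrix, using $t^2=1$. For the interference-neutralization equality \eqref{eqt:def_PB_in_in}, $\T\bvec(\bS)=-\T\bvec(\bH)$, I would square both sides: $\|\T\bvec(\bS)+\T\bvec(\bH)\|^2=0$, which expands to $\bvec(\bS)^H\T^H\T\bvec(\bS)+2\re\big(\bvec(\bS)^H\T^H\T\bvec(\bH)\big)+\bvec(\bH)^H\T^H\T\bvec(\bH)=0$; in the augmented form with $t=1$ this is precisely $\y^H\hat{\D}_4\y=0$ with the off-diagonal blocks $\T^H\T\bvec(\bH)$ and its conjugate transpose, and bottom-right entry $\bvec(\bH)^H\T^H\T\bvec(\bH)$. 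I should note that a Hermitian quadratic form $\y^H\hat{\D}_4\y$ is automatically real, and since $\hat{\D}_4\succeq 0$ (it is $[\T^H,\T^H\bvec(\bH)]^H[\,\cdot\,]$-like, a Gram matrix), the equation $\y^H\hat{\D}_4\y=0$ indeed forces $\T\bvec(\bS)+t\,\T\bvec(\bH)=0$, recovering the linear constraint exactly — this equivalence is worth stating explicitly.

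Finally I would verify the objective-value bookkeeping: having normalized $\y^H\hat{\B}_{12}\y=1$, the maximized objective $\y^H\hat{\B}_{11}\y$ equals $\sinr_1^{\IN}(\bS,P_1)$ at the optimizer, hence $\gamma_1^{\#}=\bar{\y}^H\hat{\B}_{11}\bar{\y}$ and the Pareto point is $(C(\gamma_1^{\#}),C(\gamma_2),\dots,C(\gamma_K))$ as claimed (I note the statement writes $\hat{\B}_{i1}$, evidently a typo for $\hat{\B}_{11}$). Conversely, any feasible $\bvec(\bS)$ for \eqref{def:PB_in} lifts to a feasible $\y$ (take $t=1$) with the same SINR values, and scaling $\y$ leaves all the homogeneous ratio/inequality constraints invariant, so the two problems have the same optimal value; this two-way reduction is what makes \eqref{eqt:pb_pro1} equivalent rather than merely a relaxation. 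The main obstacle I anticipate is purely mechanical but error-prone: correctly propagating the inverse Kronecker factor $(\G_{rt}^T\otimes\G_{dr}^H)^{-1}=\G_{rt}^{-T}\otimes\G_{dr}^{-H}$ through the vectorization identities so that the noise term $\|\g_{jr}^H\R(\bS)\|^2$ collapses to the clean form $\G_{rt}^{-*}\G_{rt}^{-T}\otimes\e_j\e_j^T$ and so that $\bL$ (presumably $\bL=\big((\G_{rt}^{-T}\G_{rt}^{*-1})\text{-type factor}\big)$, or more simply the matrix representing $\bvec(\bS)\mapsto\diag(\G_{dr}^H\R(\bS)\G_{rt})$ restricted appropriately) has exactly the rows that make $\e_j^T\bL$ pick out $s_j$-dependent and $h_{jj}$-dependent pieces consistently with the off-diagonal blocks $\bL^T\e_j\h_{jj}$; getting the conjugations and transposes right on the first pass is the delicate part, and I would cross-check against the $K=2$ case where $\T=[0,1,0,0;0,0,1,0]$ to catch sign or index slips.
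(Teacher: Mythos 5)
Your proposal is correct and follows essentially the same route as the paper's Appendix C: parametrize $\R=\G_{dr}^{-H}\bS\G_{rt}^{-1}$, express signal, interference-free noise, and power terms as quadratic forms in $\bvec(\bS)$ (the paper reaches the same noise-power identity via commutation matrices, you via $\g_{jr}^H=\e_j^T\G_{dr}^H$, which is equivalent), replace the linear IN constraint by the squared-norm condition giving $\hat{\D}_4$, and homogenize with the Charnes--Cooper auxiliary scalar $t$ to obtain \eqref{eqt:pb_pro1}. The only nitpick is that the normalization fixes the denominator $\y^H\hat{\B}_{12}\y=1$ rather than $t=1$, but your scaling-invariance argument already covers this.
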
 \vspace{0.3cm}
\begin{proof}
 See Appendix \ref{app:pb}.
\end{proof}

Let $\Y=\y\y^H$. Using SDR techniques, the problem in \eqref{eqt:pb_pro1} can be relaxed to the following problem by dropping the rank one constraint on $Y$.
Problem \eqref{eqt:convex_pb} is a convex problem, in particular a semi-definite program, which can be solved efficiently.
\begin{equation}\label{eqt:convex_pb}
\left\{ \begin{aligned}
 \max_{\Y \in \mathbb{C}^{K^2+1}, \Y \succeq 0} \hspace{1cm} & \tr \left(\hat{\B}_{11} \Y \right)\\
\mbox{s.t.} \hspace{1cm} & \tr \left(\hat{\B}_{12} \Y\right)=1,\\
&\tr\left( \hat{\B}_j \Y\right) \geq 0, \hspace{1cm} j=2,\ldots, K,\\
& \tr\left(\hat{\D}_3 \Y \right) \leq 0,\\
& \tr\left( \hat{\D}_4 \Y\right) =0. 
 \end{aligned} \right.
\end{equation} 
Note that, if the optimal solution in \eqref{eqt:convex_pb} is rank one, then such solution solves \eqref{eqt:pb_pro1} optimally. If the optimal solution in \eqref{eqt:convex_pb} is not rank one, then the optimality of \eqref{eqt:pb_pro1} can no longer be guaranteed. In the following, we characterize the rank of the optimal solution of \eqref{eqt:convex_pb}.

\begin{Lemma}\label{thm:rank1}
 The optimal solution of $K$-user IRC Pareto boundary problem with IN in \eqref{eqt:convex_pb}, $\tilde{\Y}$, satisfies
\begin{equation}
 \rank(\tilde{\Y})\leq \sqrt{K+1}.
\end{equation}
\end{Lemma}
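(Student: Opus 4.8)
The plan is to reduce the SDP \eqref{eqt:convex_pb} to an equivalent one carrying only $K+1$ linear constraints and then to invoke the same rank-reduction theorem for complex-valued SDPs \cite[Theorem 3.2]{Huang2010} that already appeared in Corollary \ref{Cor:rank1_no_in}. Counted naively, \eqref{eqt:convex_pb} has $K+2$ linear constraints — the normalization $\tr(\hat{\B}_{12}\Y)=1$, the $K-1$ SINR constraints $\tr(\hat{\B}_j\Y)\geq 0$, the power constraint $\tr(\hat{\D}_3\Y)\leq 0$, and the interference-neutralization constraint $\tr(\hat{\D}_4\Y)=0$ — which would only give $\rank\leq\sqrt{K+2}$; the crux is that the last of these can be removed at no cost.

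First I would observe, from the block form of $\hat{\D}_4$ in Lemma \ref{lem:pb}, that $\hat{\D}_4=\mathbf{M}^H\mathbf{M}$ with $\mathbf{M}=[\T\;\;\T\bvec(\bH)]$; in particular $\hat{\D}_4\succeq 0$. Hence for any feasible $\Y$ (so $\Y\succeq 0$) one has $0=\tr(\hat{\D}_4\Y)=\|\mathbf{M}\Y^{1/2}\|_F^2$, which forces $\mathbf{M}\Y=0$, i.e. $\operatorname{range}(\Y)\subseteq\mathcal{N}(\mathbf{M})=\mathcal{N}(\hat{\D}_4)$ (equivalently, in the QCQP \eqref{eqt:pb_pro1} every feasible $\y$ satisfies $\hat{\D}_4\y=0$). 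Let $\U$ be a matrix whose columns form an orthonormal basis of $\mathcal{N}(\hat{\D}_4)$ and substitute $\Y=\U\Z\U^H$ with $\Z\succeq 0$; by the range condition the map $\Z\mapsto\U\Z\U^H$ is a bijection between the feasible set of the reduced problem and that of \eqref{eqt:convex_pb}, so this substitution is without loss of optimality. In the variable $\Z$ the constraint $\tr(\hat{\D}_4\Y)=0$ becomes vacuous, the objective and the other three constraints are replaced by $\U^H(\cdot)\U$, and we obtain an equivalent SDP with the same optimal value and exactly $K+1$ linear constraints.

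It then remains to verify the hypotheses of the rank-reduction theorem. Nonemptiness of the feasible set follows from Theorem \ref{lem:fea}, which guarantees that \eqref{eqt:def_PB_in_pow}--\eqref{eqt:def_PB_in_in}, and hence \eqref{eqt:convex_pb}, is feasible; and the optimal set is bounded because $\tr(\hat{\D}_3\Y)\leq 0$ together with $\tr(\hat{\B}_{12}\Y)=1$ confines $\Z$ to a compact set (physically, the finite relay power $P_r^{max}$ and source powers cap the objective). Applying \cite[Theorem 3.2]{Huang2010} to the reduced SDP then produces an optimal $\Z^\star$ with $\rank(\Z^\star)^2\leq K+1$. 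Setting $\tilde{\Y}=\U\Z^\star\U^H$ and using that $\U$ has full column rank, so $\rank(\tilde{\Y})=\rank(\Z^\star)$, we get $\rank(\tilde{\Y})\leq\sqrt{K+1}$ with $\tilde{\Y}$ optimal for \eqref{eqt:convex_pb}, which is the claim.

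\emph{The main obstacle} is essentially the second step: recognizing that an equality constraint of the form $\tr(\hat{\D}_4\Y)=0$ with $\hat{\D}_4\succeq 0$ collapses into a support/range condition on $\Y$ rather than consuming a linear degree of freedom, and can therefore be absorbed into the positive-semidefinite variable by restricting to $\mathcal{N}(\hat{\D}_4)$ — this is exactly what yields $\sqrt{K+1}$ instead of the naive $\sqrt{K+2}$. Everything else — the factorization $\hat{\D}_4=\mathbf{M}^H\mathbf{M}$, the compactness/feasibility check, and the change of variables — is routine, and the rank bound itself is an off-the-shelf consequence of \cite{Huang2010}.
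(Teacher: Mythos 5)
Your proof is correct and follows essentially the same route as the paper: exploit $\hat{\D}_4\succeq 0$ so that the interference-neutralization equality constraint collapses into a null-space (range) restriction on the optimization variable, reduce to an equivalent problem with only $K+1$ constraints, and invoke \cite[Theorem 3.2]{Huang2010}. The only cosmetic difference is that you perform the substitution at the SDP level ($\Y=\U\Z\U^H$) and explicitly verify feasibility and boundedness, whereas the paper substitutes $\y=\V_0\x$ at the QCQP level before applying the same theorem.
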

\begin{proof}
\textcolor{black}{
Note that the matrix $\hat{\D}_4 \succeq 0$ in  \eqref{eqt:convex_pb} with rank $K^2-K$ and therefore we have eigenvalue decomposition of $\hat{\D}_4$ as
\begin{equation}
 \hat{\D}_4= [\V, \V_0] \diag(\lambda_1,\ldots,\lambda_{K^2-K},\0_{1 \times \left((K^2+1)- (K^2-K)\right)}) \left[\V^H; \V_0^H \right]
\end{equation}
where $\V$ is the eigenvector matrix of $\hat{\D}_4$ corresponding to the eigenvalues $\lambda_1 \geq \ldots \geq \lambda_{K^2-K} \geq 0$; $\V_0$ therefore spans the null space of $\hat{\D}_4$.
Let $\y=\V_0 \x$ for $\x \in \mathbb{C}^{(K^2+1) \times 1}$ and rewrite the optimization problem \eqref{eqt:convex_pb} to
\begin{equation}\label{eqt:qcqp1}
\left\{\begin{aligned}
  \max_{\x \mathbb{C}^{(K^2+1) \times 1}} \hspace{0.3cm} & \x^H \V_0^H \hat{\B}_{11}\V_0 \x\\
\mbox{s.t.} \hspace{0.3cm} & \x^H \V_0^H\hat{\B}_{12}\V_0 \x=1,\\
& \x^H \V_0^H\left(\hat{\B}_{j1}- \gamma_j \hat{\B}_{j2} \right) \V_0\x \geq 0, \hspace{0.3cm} j=2,\ldots, K,\\
& \x^H \V_0 \hat{\D}_3 \V_0 \x \leq 0.
 \end{aligned} \right. 
\end{equation}
There are now $K+1$ constraints in \eqref{eqt:qcqp1} and result follows from \cite[Theorem 3.2]{Huang2010}.}
\end{proof}
\begin{Corollary}\label{cor:rank}
\textcolor{black}{ The optimization methods and results for \eqref{eqt:convex_pb} are similar to Corollary \ref{Cor:rank1_no_in} and shall not be repeated here.}
\end{Corollary}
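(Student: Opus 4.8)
The plan is to eliminate the quadratic equality constraint $\y^H \hat{\D}_4 \y = 0$ (equivalently $\tr(\hat{\D}_4 \Y)=0$ in \eqref{eqt:convex_pb}) by restricting the optimization to the null space of $\hat{\D}_4$, which shrinks the ambient dimension from $K^2+1$ down to $K+1$, and then to apply the rank bound of \cite[Theorem 3.2]{Huang2010} to the semidefinite relaxation of the resulting lower-dimensional quadratically constrained quadratic program (QCQP).

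First I would unpack the structure of $\hat{\D}_4$. Writing $\mathbf{M}=[\,\I_{K^2},\ \bvec(\bH)\,]\in\mathbb{C}^{K^2\times(K^2+1)}$, the block form of $\hat{\D}_4$ in Lemma \ref{lem:pb} factors as $\hat{\D}_4=(\T\mathbf{M})^H(\T\mathbf{M})$, so $\hat{\D}_4\succeq 0$ and $\y^H\hat{\D}_4\y=\|\T\mathbf{M}\y\|^2$. Since $\T$ is the $(K^2-K)\times K^2$ off-diagonal row-selection matrix it has full row rank, hence $\T\mathbf{M}=[\,\T,\ \T\bvec(\bH)\,]$ has rank $K^2-K$ and $\rank(\hat{\D}_4)=K^2-K$. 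Consequently $\y^H\hat{\D}_4\y=0$ holds exactly when $\y\in\mathcal{N}(\T\mathbf{M})$, a subspace of dimension $(K^2+1)-(K^2-K)=K+1$. Let the columns of $\V_0\in\mathbb{C}^{(K^2+1)\times(K+1)}$ be an orthonormal basis of $\mathcal{N}(\T\mathbf{M})$; then the feasible vectors of \eqref{eqt:pb_pro1} are precisely $\y=\V_0\x$ with $\x\in\mathbb{C}^{(K+1)\times 1}$.

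Substituting $\y=\V_0\x$ into \eqref{eqt:pb_pro1} yields a QCQP in $\x$ whose data are the compressed matrices $\V_0^H\hat{\B}_{11}\V_0$, $\V_0^H\hat{\B}_{12}\V_0$, $\V_0^H\hat{\B}_j\V_0$ and $\V_0^H\hat{\D}_3\V_0$: maximize a Hermitian quadratic form subject to one normalization equality, $K-1$ quadratic inequalities, and the relay-power inequality -- exactly $K+1$ constraints and no residual equality. Forming $\mathbf{X}=\x\x^H$ and dropping the rank constraint gives an SDR in dimension $K+1$ with $K+1$ affine constraints, to which \cite[Theorem 3.2]{Huang2010} applies and furnishes an optimal $\tilde{\mathbf{X}}$ with $\rank(\tilde{\mathbf{X}})^2\le K+1$. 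Pulling back through $\tilde{\Y}=\V_0\tilde{\mathbf{X}}\V_0^H$, which preserves rank because $\V_0$ has full column rank and which is an optimal solution of \eqref{eqt:convex_pb} by the equivalence above, delivers $\rank(\tilde{\Y})\le\sqrt{K+1}$.

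I expect the crux to be the first computation: recognizing that the interference-neutralization requirement \eqref{eqt:def_PB_in_in}, which is genuinely $K(K-1)$ scalar equations, appears in \eqref{eqt:convex_pb} as the single rank-$(K^2-K)$ quadratic form $\hat{\D}_4$, so that imposing it collapses the effective dimension to exactly $K+1$ and no more. Getting the factorization $\hat{\D}_4=(\T\mathbf{M})^H(\T\mathbf{M})$ and the rank count $K^2-K$ precisely right is what makes the final bound come out as $\sqrt{K+1}$ rather than something weaker; the change of variables, the relaxation, the invocation of Huang's bound, and the rank-preserving pull-back are then routine.
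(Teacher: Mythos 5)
Your proposal is correct and takes essentially the same route as the paper: the paper establishes the needed rank bound in Lemma \ref{thm:rank1} by exactly this restriction $\y=\V_0\x$ to the null space of $\hat{\D}_4$, counting the remaining $K+1$ constraints and invoking \cite[Theorem 3.2]{Huang2010}, after which Corollary \ref{cor:rank} simply carries over the $K=2$ tightness, rank-one reduction, and randomization conclusions of Corollary \ref{Cor:rank1_no_in}. Your explicit factorization $\hat{\D}_4=(\T\mathbf{M})^H(\T\mathbf{M})$ with $\mathbf{M}=[\,\I_{K^2},\ \bvec(\bH)\,]$, the dimension count $K+1$ for $\x$, and the rank-preserving pull-back $\tilde{\Y}=\V_0\tilde{\mathbf{X}}\V_0^H$ only make explicit details (positive semidefiniteness, $\rank(\hat{\D}_4)=K^2-K$, equivalence of the reduced SDR) that the paper asserts without proof.
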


\section{The optimal transmit power at informed simple S-D nodes}\label{sec:opt_power}
In the scenario where the source and destination nodes are informed of the presence of the relay and are willing to improve the rate performance of the system by cooperation, we can improve further the Pareto boundary by optimizing the transmit power at the source nodes. In the $K$-user SISO-IC, the power allocation at the $\PB$ is characterized and is obtained by searching over $K-1$ real-valued parameters \cite{Charafeddine2007}. 
\textcolor{black}{In the following, we observe that given the relay matrix $\R$, the Pareto optimal source power in IC does not apply to the IRC. The is due to the dependence of the source power in the relay power constraint. Nevertheless, we obtain the Pareto optimal source power as a function of relay matrix $\R$ and target SINR values $\gamma_2, \ldots, \gamma_K$. 
\begin{Theorem}\label{thm:opt_pow_no_in}
 For any given relay matrix $\R$, the $\PB$ of \eqref{def:PB_no_in} is a set of rate tuple $(C(\gamma_1^{\#}), C(\gamma_2),\ldots, C(\gamma_K))$ which are attained by 
the optimal transmit power at S nodes, $\p^{\#}=[p_1^{\#},\p_{2:K}^{\#}]\geq \0$:
\begin{equation}\label{eqt:opt_pow_no_in}
\left\{
\begin{aligned}
 p_1^{\#} &=  \min \left( P_s^{max}, \frac{ P_r- \tr(\R\R^H)-  \bc^T \left[\A \right]_{(:,2:K)}^{-1} \q}{\left(\| \g_{r1}\|^2 - \bc^T\left[\A \right]_{(:,2:K)}^{-1}\ba_1 \right)}, \min_k \left( \frac{ P_s^{max} \sum_{j=2}^K [\A]_{kj} - [\q]_k}{[\A]_{k1}} \right)\right) \\
 \p^{\#}_{2:K} &= \min \left(\left[\A\right]_{(:,2:K)}^{-1} \left(\q- \ba_1 p_1^{\#}\right), P_s^{max} \right) 
\end{aligned} \right.
\end{equation}
where $\bc^T= \left[\|\g_{r2}\|^2,\ldots,\| \g_{rK}\|^2 \right]$, $\left[\q\right]_m =  \gamma_{m+1} \left(  \|\g_{(m+1)r}^H \R \|^2 + 1 \right)$ and 
\begin{equation}
  [\A]_{ml} =\left\{ \begin{array}{cc}
                                    |h_{(m+1)(m+1)} + \g_{(m+1)r}^H \R \g_{r(m+1)}|^2 & \mbox{if } m+1=l,\\
				     - \gamma_{m+1} |h_{(m+1)l} + \g_{(m+1)r}^H \R \g_{rl}|^2 & \mbox{if } (m+1) \neq l.
                                   \end{array}
\right. , 1 \leq m \leq K-1.
\end{equation} and $[\A]_{(2:K)}$ is a matrix which consists of the second to $K$-th columns of the matrix $\A$.
\end{Theorem}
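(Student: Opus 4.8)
The plan is to freeze the relay matrix $\R$ and read \eqref{def:PB_no_in} as a pure power-control problem in $\p$; the argument is a refinement of the classical interference-channel power-allocation argument (cf. \cite{Charafeddine2007}), the new ingredient being that the relay-power budget \eqref{eqt:pow_con1} feeds $\p$ back into the problem and contributes one extra bound on $P_1$. \textbf{Step 1 (linearization).} Clearing the denominator in \eqref{eqt:sinr_no_in}, each constraint $\sinr_j(\R,\p)\ge\gamma_j$, $j=2,\ldots,K$, becomes linear in $\p$, and collecting them gives $\A\p\ge\q$ with $\A$ and $\q$ exactly the quantities in the statement (row $m$ has diagonal-type entry $|h_{(m+1)(m+1)}+\g_{(m+1)r}^H\R\g_{r(m+1)}|^2$, interference entries $-\gamma_{m+1}|h_{(m+1)l}+\g_{(m+1)r}^H\R\g_{rl}|^2$, and $[\q]_m=\gamma_{m+1}(\|\g_{(m+1)r}^H\R\|^2+1)$). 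Substituting \eqref{eqt:Q} into $\tr(\R\Q\R^H)$ makes the relay-power constraint \eqref{eqt:pow_con1} affine in $\p$ as well: $\|\g_{r1}\|^2 P_1+\bc^T\p_{2:K}+\tr(\R\R^H)\le P_r^{max}$ with $\bc$ as in the statement. Thus, with $\R$ fixed, \eqref{def:PB_no_in} becomes "maximize the scalar ratio $\sinr_1(\R,\p)$ over $\0\le\p\le P_s^{max}\1$, subject to $\A\p\ge\q$ and this one affine inequality".

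\textbf{Step 2 (monotonicity and active constraints).} From \eqref{eqt:sinr_no_in}, $\sinr_1(\R,\cdot)$ is strictly increasing in $P_1$ and nonincreasing in each $P_l$, $l\ge 2$ (and if $h_{11}+\g_{1r}^H\R\g_{r1}=0$ the objective is identically zero and any feasible $\p$ is optimal). Hence at an optimizer $P_1$ should be as large, and $\p_{2:K}$ as small, as feasibility allows. I would then show that every SINR constraint is tight at the optimum: if $\sinr_j(\R,\p)>\gamma_j$ for some $j\ge 2$ with $P_j>0$, lowering $P_j$ keeps that constraint satisfied, does not decrease any $\sinr_i$ with $i\ne j$, and strictly slackens the relay-power constraint, so it weakly improves the objective while enlarging the admissible range of $P_1$; decreasing $P_j$ until the constraint binds (it cannot reach $0$ since $\gamma_j>0$) shows we may take $\A\p^{\#}=\q$. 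In the degenerate case $|h_{1j}+\g_{1r}^H\R\g_{rj}|^2=0$, $P_j$ is objective-irrelevant and we simply reset it to the binding value.

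\textbf{Step 3 (collapse to a scalar maximization over $P_1$).} The submatrix $[\A]_{(:,2:K)}$ is a $Z$-matrix with positive diagonal, so under the standing feasibility assumption on the target SINRs it is a nonsingular $M$-matrix and $[\A]_{(:,2:K)}^{-1}\ge\0$ entrywise; in particular $-[\A]_{(:,2:K)}^{-1}\ba_1\ge\0$, since $\ba_1\le\0$. Solving $\A\p^{\#}=\q$ for $\p_{2:K}$ in terms of $P_1$ yields $\p_{2:K}=[\A]_{(:,2:K)}^{-1}(\q-\ba_1 P_1)$, the componentwise-minimal nonnegative vector meeting the SINR constraints for that $P_1$, and it is nondecreasing in $P_1$. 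Substituting it into the three remaining restrictions on $P_1$ --- (i) $P_1\le P_s^{max}$; (ii) the affine relay-power inequality, which rearranges to $(\|\g_{r1}\|^2-\bc^T[\A]_{(:,2:K)}^{-1}\ba_1)P_1\le P_r^{max}-\tr(\R\R^H)-\bc^T[\A]_{(:,2:K)}^{-1}\q$; (iii) $[\A]_{(:,2:K)}^{-1}(\q-\ba_1 P_1)\le P_s^{max}\1$, which by monotonicity of $[\A]_{(:,2:K)}^{-1}$ unfolds into the per-row scalar bounds furnishing the $\min_k$ term --- gives three upper bounds on $P_1$. Since the objective increases in $P_1$, its optimal value $p_1^{\#}$ is the minimum of the three, exactly \eqref{eqt:opt_pow_no_in}, and $\p_{2:K}^{\#}=[\A]_{(:,2:K)}^{-1}(\q-\ba_1 p_1^{\#})$; nonnegativity of $\p^{\#}$ follows from $[\A]_{(:,2:K)}^{-1}\ge\0$, $\q\ge\0$, $-\ba_1\ge\0$, $p_1^{\#}\ge 0$, and the outer $\min(\cdot,P_s^{max})$ --- redundant under (iii) --- is retained as a safeguard for corner cases.

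\textbf{Main obstacle.} The delicate point is Step 2 together with the simultaneous bookkeeping of the three constraint families in Step 3: one must verify that clipping the source powers at $P_s^{max}$ never makes it profitable to leave an SINR constraint slack, and one needs the $M$-matrix / monotonicity facts that make "the minimal $\p_{2:K}$ for a given $P_1$" well defined and let the whole problem collapse to maximizing the single scalar $P_1$. Everything afterwards is the routine algebra of substituting $\p_{2:K}=[\A]_{(:,2:K)}^{-1}(\q-\ba_1 P_1)$ into the affine constraints.
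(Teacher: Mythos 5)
Your overall route is the same as the paper's: freeze $\R$, argue that all SINR constraints are active at the optimum, solve the resulting linear system $\A\p=\q$ for $\p_{2:K}$ as an affine function of $P_1$, substitute into the relay-power and per-source-power constraints to get upper bounds on $P_1$, and conclude by monotonicity that $p_1^{\#}$ is the smallest of those bounds. Your $M$-matrix observations (entrywise nonnegativity of $[\A]_{(:,2:K)}^{-1}$, hence nonnegativity of $\p_{2:K}^{\#}$ and its monotonicity in $P_1$) go beyond what the paper writes down and are a useful supplement rather than a deviation.

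The genuine gap is the final monotonicity claim. In Step 2 you only establish that $\sinr_1$ is increasing in $P_1$ \emph{with $\p_{2:K}$ held fixed} and nonincreasing in each $P_l$, $l\geq 2$; but once you enforce $\A\p=\q$, the vector $\p_{2:K}(P_1)=[\A]_{(:,2:K)}^{-1}(\q-\ba_1 P_1)$ is nondecreasing in $P_1$ (as you yourself note), so raising $P_1$ also raises the interference at $D_1$ and the two effects pull in opposite directions. The sentence ``since the objective increases in $P_1$'' is therefore exactly the statement that still needs proof, and it is the one step the paper does not skip: substituting $\p_{2:K}(P_1)$ into $\sinr_1$ yields a univariate function of the form $z_1P_1/(z_2P_1+z_3)$, see \eqref{eqt:whatever}, whose derivative $z_1z_3/(z_2P_1+z_3)^2$ is positive because $z_3>0$ (positivity of $z_3$, and nonnegativity of $z_2=-\bb^T[\A]_{(:,2:K)}^{-1}\ba_1$ used in the paper's rewriting, are in fact consequences of the $[\A]_{(:,2:K)}^{-1}\geq \0$ property you invoke). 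Without this computation, the conclusion that the optimum sits at the largest feasible $P_1$ --- i.e., at the minimum of the three upper bounds in \eqref{eqt:opt_pow_no_in} --- is unsupported. The remainder of your argument (tightness of the SINR constraints, the relay-power bound corresponding to \eqref{eqt:upper_bound1}, and the per-source bounds corresponding to \eqref{eqt:upper_bound2}) matches the paper's proof.
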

}
\vspace{0.1cm}
\begin{proof}
 See Appendix \ref{app:opt_pow_no_in}.
\end{proof}
Theorem \ref{thm:opt_pow_no_in} gives the Pareto optimal transmit power $\p^{\#}$ as a function of the given relay matrix $\R$ and the target SINR values $\gamma_2,\ldots, \gamma_K$. Similar results can be obtained when IN is employed, as shown below.
\textcolor{black}{
\begin{Theorem}\label{thm:pow_in}
 For any given relay matrix $\R$, the $\PB$ of \eqref{def:PB_in} is a set of rate tuple $(C(\gamma_1^{\#}), C(\gamma_2),\ldots, C(\gamma_K))$ which are attained by 
the optimal transmit power at S nodes, $\bu=[u_1,\ldots,u_{K}]\geq \0$:
\begin{equation}\label{eqt:pow_in}
\left\{
\begin{aligned}
 u_1 &= \min \left( \frac{P_r^{max}-\tr \left( \R(\s) \R(\s)^H \right)}{ \|\g_{r1} \R(\s) \|^2 }  - \sum_{j=2}^K \frac{\|\g_{rj} \R(\s)\|^2}{\|\g_{r1}\R(\s) \|^2} u_j, P_s^{max} \right)\\
 u_j  & = \min \left( \frac{\gamma_j \left(\|\g_{jr}^H \R(\s) \|^2 + 1 \right)}{ |h_{jj} + \g_{jr}^H \R(\s) \g_{rj}|^2 }, P_s^{max} \right), \hspace{0.3cm} j=2,\ldots, K.
\end{aligned} \right.
\end{equation}
\end{Theorem}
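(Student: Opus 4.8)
The plan is to exploit the key structural consequence of interference neutralization: once the relay matrix $\R(\s)$ (equivalently $\bS$) is fixed, the expression $\sinr_j^{\IN}(\bS,u_j)$ in \eqref{eqt:sinr_in} contains no interference terms, so it is a function of the single source power $u_j$ alone and is strictly increasing in it. Consequently the powers $u_1,\ldots,u_K$ interact only through the relay power constraint \eqref{eqt:def_PB_in_pow}. First I would expand that constraint using $\Q=\sum_{j=1}^K u_j\g_{rj}\g_{rj}^H+\I$ from \eqref{eqt:Q} to obtain
\begin{equation*}
 \tr\!\left(\R(\s)\Q\R(\s)^H\right)=\sum_{j=1}^K u_j\,\tr\!\left(\R(\s)\g_{rj}\g_{rj}^H\R(\s)^H\right)+\tr\!\left(\R(\s)\R(\s)^H\right)\le P_r^{max},
\end{equation*}
which is affine in $\bu$ with a nonnegative coefficient on every $u_j$.

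Next, for $j=2,\ldots,K$, observe that $u_j$ does not appear in the objective $\sinr_1^{\IN}(\bS,u_1)$; it only (i) has to satisfy $\sinr_j^{\IN}(\bS,u_j)\ge\gamma_j$ and (ii) appears with a nonnegative coefficient in the relay constraint above. Hence increasing $u_j$ beyond what (i) forces can never help and only shrinks the feasible set, so at any Pareto-optimal point $u_j$ must equal the smallest power meeting (i). Solving $\sinr_j^{\IN}(\bS,u_j)=\gamma_j$ gives the threshold $\gamma_j(\|\g_{jr}^H\R(\s)\|^2+1)/|h_{jj}+\g_{jr}^H\R(\s)\g_{rj}|^2$, which, truncated above by the per-source cap $P_s^{max}$, is exactly the second line of \eqref{eqt:pow_in}. (If the threshold exceeds $P_s^{max}$ the target $\gamma_j$ cannot be met and the corresponding $\gamma$-tuple does not lie on the boundary; the $\min$ just records this.) Because the relay constraint is separately monotone in each $u_j$, there is no trade-off among $u_2,\ldots,u_K$, so these componentwise-minimal choices are jointly optimal.

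With $u_2,\ldots,u_K$ pinned down, $\sinr_1^{\IN}(\bS,u_1)$ is increasing in $u_1$, so the objective is maximized by making $u_1$ as large as the two remaining restrictions allow: the cap $u_1\le P_s^{max}$ and the relay constraint solved for $u_1$,
\begin{equation*}
 u_1\le\frac{P_r^{max}-\tr\!\left(\R(\s)\R(\s)^H\right)}{\|\g_{r1}\R(\s)\|^2}-\sum_{j=2}^K\frac{\|\g_{rj}\R(\s)\|^2}{\|\g_{r1}\R(\s)\|^2}\,u_j.
\end{equation*}
Taking the minimum of these two upper bounds reproduces the first line of \eqref{eqt:pow_in}, and then $\gamma_1^{\#}=\sinr_1^{\IN}(\bS,u_1)$ is the first coordinate of the claimed Pareto point $(C(\gamma_1^{\#}),C(\gamma_2),\ldots,C(\gamma_K))$. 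To finish I would check Pareto optimality directly: a dominating tuple would need a strictly larger $\sinr_1^{\IN}$, hence a strictly larger feasible $u_1$, contradicting the bound just derived, while no other user's rate can be raised without raising its $\gamma_j$.

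The main obstacle is the bookkeeping around the $P_s^{max}$ truncations together with the argument that each $u_j$ ($j\ge2$) may be set to its individual minimum simultaneously — this rests precisely on the relay constraint being affine with nonnegative coefficients, so the per-user reductions do not conflict. A secondary point to nail down is that feasibility of the $\gamma$-tuple (Theorem \ref{lem:fea} together with $u_j\le P_s^{max}$ for $j\ge2$) makes the right-hand side of the $u_1$ bound nonnegative, so that $u_1\ge0$ as asserted. Conceptually this theorem is the IN counterpart of Theorem \ref{thm:opt_pow_no_in}: neutralizing the cross-links turns the linear system that determines $\p_{2:K}^{\#}$ there into a diagonal one, which is why each $u_j$ here decouples into a closed-form scalar.
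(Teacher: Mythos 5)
Your proposal is correct and follows essentially the same route as the paper's proof: with the relay matrix fixed, IN decouples each user's SINR so it depends only on its own power, the powers are coupled only through the (affine) relay power constraint, the constraints for users $2,\ldots,K$ are driven to equality (minimal $u_j$, capped by $P_s^{max}$), and $u_1$ is then pushed to the largest value permitted by the relay constraint and the cap. Your monotonicity argument for activity of the constraints is the same idea as the paper's perturbation/contradiction step, just phrased differently.
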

}
\vspace{0.1cm}
\begin{proof}
 See Appendix \ref{app:pow_in}.
\end{proof}

\textcolor{black}{Note that Theorems \ref{thm:opt_pow_no_in} and \ref{thm:pow_in} give the Pareto optimal source power allocation for a given relay processing matrix $R$. While the joint optimization of the source power and relay processing matrix is highly complicated, one can approach the problem by solving iteratively (a) the relay processing matrix given the source power \eqref{eqt:convex_no_in_relaxed} and \eqref{eqt:convex_pb} and (b) the source power given the relay processing matrix \eqref{eqt:opt_pow_no_in} and \eqref{eqt:pow_in}. However, the iterative optimization approach may not converge to the global optimal solution.}

In the following section, we provide numerical evidence of performance gain of a relay introduction to a SISO interference channel. In particular, in a setting of uninformed source nodes, we show the rate improvement of solely introducing and optimizing the relay strategy whereas in a setting of informed S-D nodes, we compare the rate performance of the general relay optimization and the relay optimization with IN to the rate region of a SISO IC.

\section{Simulation Results}\label{sec:simulations}
 \textcolor{black}{For illustrative purposes, we let $K=M=2$. We assume that each element in the channel matrices $\mathbf{H}, \G_{rt}, \G_{dr}$ is an independent identically distributed complex Gaussian variable with zero mean and unit variance. 
In Section \ref{sec:sim_uninform}, we simulate the achievable rates of the SISO IC (marked as squares). For comparison, we simulate the achievable rates of the fully connected IRC with the same S-D nodes, by introducing a relay, equipped with 2 antennas, to the aforementioned IC, and the relay can choose to enforce IN (marked with asterisks) or not (marked with triangles). We show that optimized relay strategies improve achievable rate regions. In Section \ref{sim:sumrate} and \ref{sim:fairrate}, we compare the average sum rate and proportional fairness utility achieved by optimized relay strategies and by power allocation on the IC respectively. In Section \ref{sim:power}, we illustrate the sum rate performance and proportional fairness utility when the transmit power constraint at source nodes and relay nodes vary. }

\textcolor{black}{
\subsection{Rate region improvement}\label{sec:sim_uninform}
In Fig. \ref{fig:rate_improve_1}, we plot the achievable rate region of a two-user SISO IC with transmit power constraint at each source node $P_s^{max}=10dB$. Introducing an instantaneous relay, equipped with 2 antennas, we obtain an IRC. We set the relay power constraint as $P_r^{max}=20dB$. The achievable rates achieved by general relay optimization and IN outperform the IC case. The black arrows originate from the Nash Equilibrium point: the rate points in which both users transmit with full power. The north-east side of the arrows mark the rate region improved by the relay, in the scenario of uninformed source nodes. This validates our intuition that optimized relay strategies can improve achievable rates of the system even if the source nodes are oblivious to the existence of the relay and do not change their transmit power. Further note that the single user points achieved in IRC with general relay optimization always outperform the single user points in a SISO IC. It demonstrates that the relay not only is capable of reducing interference in the system but also forwarding the desired signal to the destinations. }

\textcolor{black}{
In Fig. \ref{fig:rate_improve_2}, we reduce the relay transmit power to $P_r^{max}=10dB$. We observe that the rate region achieved by IN reduce significantly because the relay is not able to neutralize interference and improve desired signal quality with limited power. 
}
%
\begin{figure}[!ht]
\begin{center}
\includegraphics[width=12cm, height=9cm,keepaspectratio]{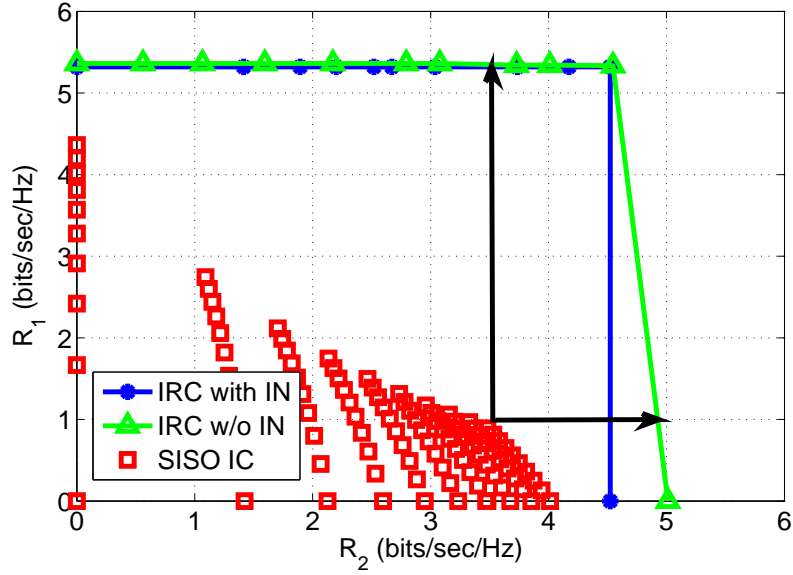}
\caption{The rate improvement of relay optimization on a two-user SISO IRC with $K=M=2$, $P_r^{max}=20dB$, $P_s^{max}=10dB$.  The arrow marks the increment of rate region by introducing a relay into the system and optimizing the relay strategy.}
\label{fig:rate_improve_1}
\end{center}
\end{figure}

\begin{figure}[!ht]
\begin{center}
 \includegraphics[width=12cm, height=9cm,keepaspectratio]{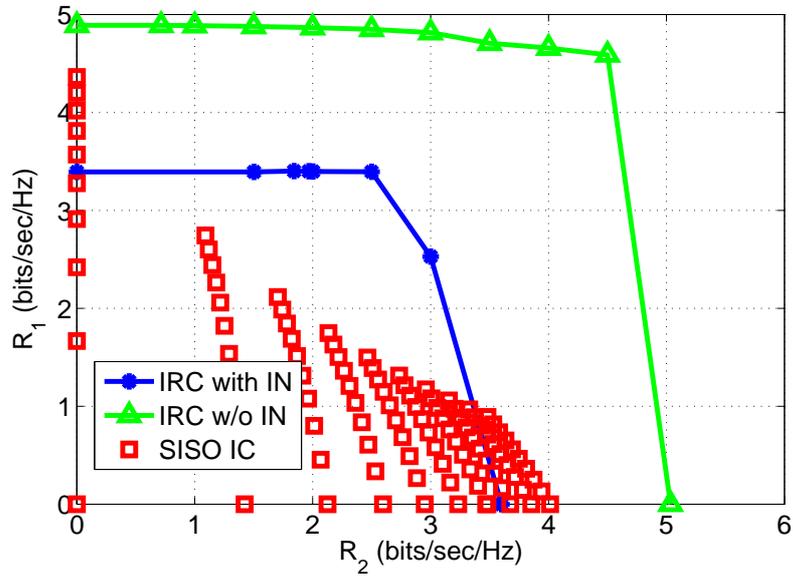}
\caption{The rate improvement of relay optimization on a two-user SISO IRC with $K=M=2$, $P_r^{max}=10dB$, $P_s^{max}=10dB$.}
\label{fig:rate_improve_2}
\end{center}
\end{figure}

\textcolor{black}{
\subsection{Average sum rate improvement}\label{sim:sumrate}
In Fig. \ref{fig:sumrate}, we show the maximum sum rate achieved by general relay optimization, IN and power allocation on the IC, averaged over 100 independent channel realizations. The power constraint at the source node is assumed to be $P_s^{max}=10dB$ and we increase the relay transmit power from $5dB$ to $25dB$. We observe that the optimized relay strategy without IN always outperform the maximum sum rate of the IC, demonstrating that an instantaneous smart relay can improve average sum rate performance. Further, we observe that the performance of IN is limited by the relay transmit power. Although IN is analytically appealing, there are limitations of the implementation of IN. Such scenarios include strong interference channels in which the receivers have strong interference from other transmitters in the system. In this case, more power at the relay may be required to completely null out interference and if such power is not available to the relay, then IN is not feasible. On the other hand, if the strength of the interference channel is not strong, enforcing IN, the relay loses its optimization degrees of freedom and may not be able to achieve some operating points as the general relay optimization would achieve.
}
\begin{figure}[!ht]
\begin{center}
 \includegraphics[width=12cm, height=9cm,keepaspectratio]{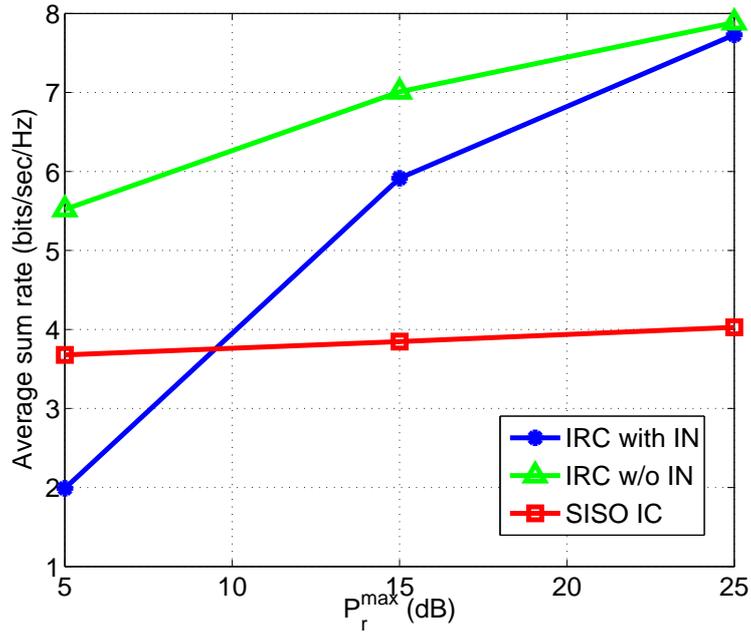}
\caption{The average sum rate a two-user SISO IRC with $K=M=2$, $P_s^{max}=10dB$. The optimized relay strategies improve the average sum rate of the system significantly.}
\label{fig:sumrate}
\end{center}
\end{figure}

\textcolor{black}{
\subsection{Proportional fairness improvement}\label{sim:fairrate}
While average sum rate is an important system performance measure, user fairness holds importance in many applications. In Fig. \ref{fig:fairrate}, we illustrate the average proportional fairness utility which is defined as the $\max_{R_1,R_2}(R_1-R_1^{NE})(R_2-R_2^{max})$. We observe that the optimized relay strategies, with and without IN, provide promising proportional fairness and better sum rate performance compared to IC.
}
\begin{figure}[!ht]
\begin{center}
 \includegraphics[width=12cm, height=9cm,keepaspectratio]{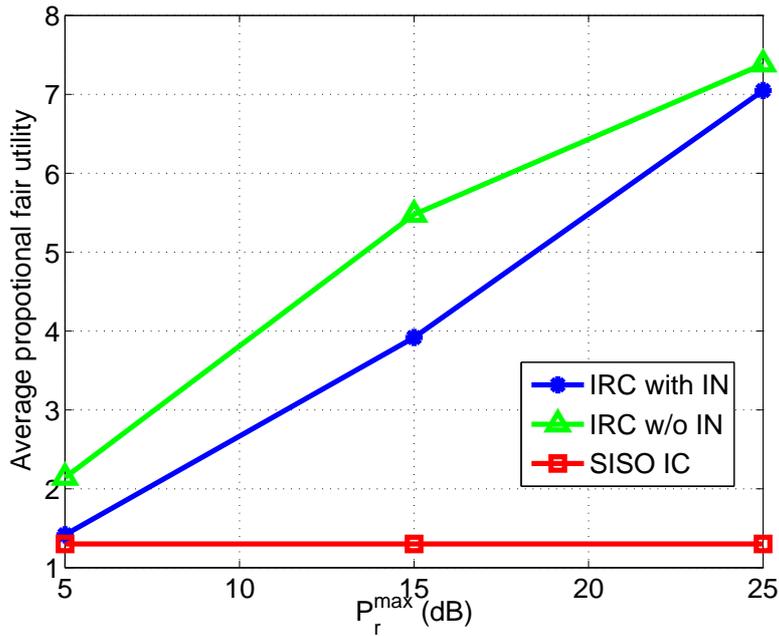}
\caption{The average proportional fairness utility $(R_1-R_1^{NE})(R_2-R_2^{NE})$ of a two-user SISO IRC with $K=M=2$, $P_s^{max}=10dB$. The optimized relay strategies improve fairness of the system significantly.}
\label{fig:fairrate}
\end{center}
\end{figure}

\textcolor{black}{
\subsection{Performance measures in terms of transmit power constraints}\label{sim:power}
It is interesting to observe that the performance of optimized relay strategies depend on both $P_s^{max}$ and $P_r^{max}$. This is due to the amplify-and-forward nature of the relay. If the transmit power from source nodes is high, the relay can spend less power on amplification of signals due to the relay power constraint. In Fig. \ref{fig:sumrate_in}, we plot the maximum sum rate achieved by optimized relay strategy with and without IN and the maximum sum rate in IC. Note that the feasibility conditions of IN, shown in Theorem \ref{lem:fea}, is validated in Fig. \ref{fig:sumrate_in}. For a fixed relay power $P_r^{max}$, when the source power increases such that the conditions are violated, IN is not feasible and the achievable rate is zero. For the general relay optimization, for a fixed relay power and increasing source power, the rate performance is not always increasing because the increased source power increases the interference power and the relay may not have enough power to manage interference and amplify desired signals simultaneously. 
}

\textcolor{black}{
In Fig. \ref{fig:fairrate_in}, we show the maximum proportional fairness utility achieved by optimized relay strategies and IC. When both source power and relay power are abundant, the fairness is desirable. However, when the source power (which is also the strength of interference) is relatively stronger than the relay power, the fairness achieved by the relay strategies is overtaken by the proportional fairness utility achieved by IC.
}
\begin{figure}[!ht]
\begin{center}
\includegraphics[width=12cm, height=10cm,keepaspectratio]{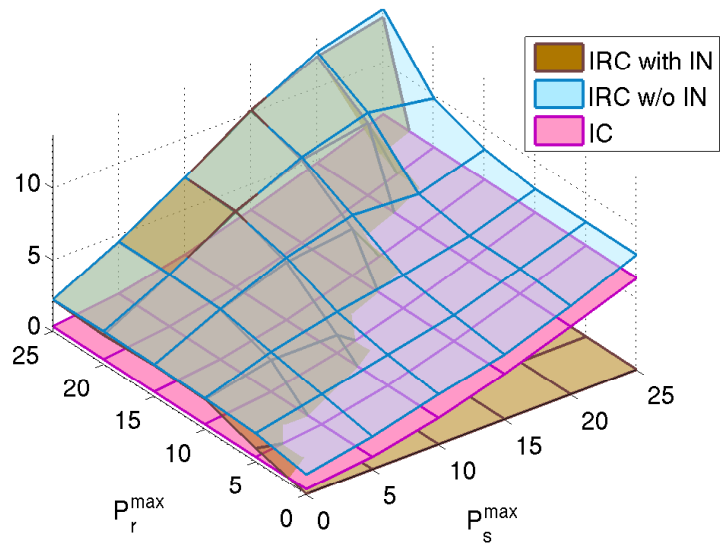}
\caption{The sumrate of a particular channel realization of a two-user SISO IRC with $K=M=2$. }
\label{fig:sumrate_in}
\end{center}
\end{figure}

\begin{figure}[!ht]
\begin{center}
 \includegraphics[width=12cm, height=10cm,keepaspectratio]{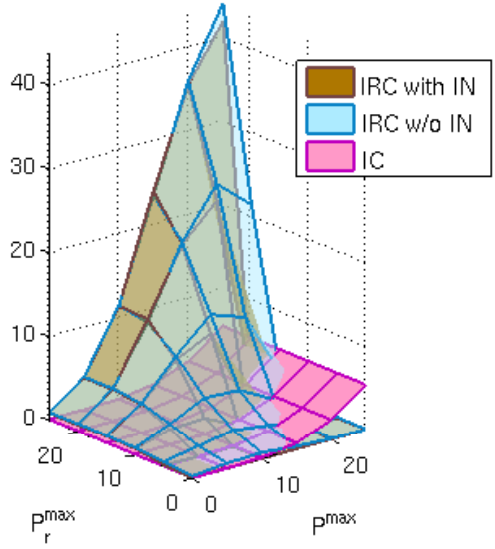}
\caption{The proportional fairness utility $(R_1-R_1^{NE})(R_2-R_2^{NE})$ of a particular channel realization of a two-user SISO IRC with $K=M=2$.}
\label{fig:fairrate_in}
\end{center}
\end{figure}

\section{Conclusion and Future research directions}\label{sec:conclusion}
The achievable rate region of a SISO-IC has been an on-going research topic, with recent interest on the question whether a relay introduction to the SISO-IC, obtaining an interference relay channel, provides any performance gain. In this paper, we study this problem by assuming an instantaneous  amplify-and-forward relay with uninformed source and destination nodes in the system. We examine the gain of rate region of the relay introduction by formulating the Pareto boundary problem with optimization over relay processing matrix. The optimization problems, with and without the employment of interference neutralization techniques, are solved using semi-definite relaxation techniques. The global optimality of the solutions are proved in the scenario of two source and two destination nodes. In the scenario of informed source nodes, we allow the source nodes to optimize their transmit power. The transmit power values at the source nodes which attain the Pareto boundary are obtained in closed-form.  Simulation results confirm that instantaneous relay is able to improve the achievable rate region, even in the scenario of uninformed source nodes; improve average sum rate and average proportional fairness of the system. 

This paper motivates the study of performance of the IRC with an AF relay which can be implemented easily in practical applications as the relay is only responsible for a simple forward process which does not incur a processing delay as compared to other complicated relays. As a preliminary study, we only allow the relay to choose between IN or no IN. To evaluate the full potential  of the AF relay, one may allow the relay to choose different relay strategies, e.g. interference forwarding, signal amplifying, interference neutralization, etc.,  depending on its power budget and the channel qualities in the system. Another interesting problem is the physical placement of the relay with the goal of rate performance improvement.

\section{Appendix}

\subsection{Proof of Theorem\ref{lem:fea}}\label{app:fea}
\textcolor{black}{
The goal of this section is to prove that the necessary and sufficient condition of $\bS$, satisfying
\begin{equation*}
\left\{
 \begin{aligned}
  & \bvec(\bS)^H \tilde{\Q} \bvec(\bS) \leq P_r^{max}\\
  & \T \bvec(\bS)= -\T \bvec(\bH)
 \end{aligned} \right.
\end{equation*} with $\tilde{\Q}= \left(\diag(\p)+ \G_{rt}^{-*} \G_{rt}^{-T} \right) \otimes \left(\G_{dr}^{-1} \G_{dr}^{-H} \right)$, is 
\begin{equation}
  \left(\T \bvec(\bH) \right)^H \left( \T \tilde{\Q}^{-1} \T^H \right)^{-1}  \T \bvec(\bH) \leq P_r^{max}.
\end{equation}
Perform eigenvalue-value decomposition on the Hermitian matrix $\tilde{\Q}=\mathbf{U}^{H} \vGamma \mathbf{U}$ and we let 
$\x= \vGamma^{1/2} \mathbf{U} \bvec(\bS)$ and $\x^H \x\leq P_r^{max}$.
Let $\F=\mathbf{U}^{H} \vGamma^{-1/2}$ and we have
\begin{equation} \label{eqt:t_h}
 \T \bvec(\bS) = \T \F \x = -\T \bvec(\bH).
\end{equation} 
Note that the matrix $\T \F$ is of dimension $K(K-1) \times K^2$ and has nullity of $K$. Denote the null space of $\T \F$ by $\mathcal{N}(\T \F)$ and we choose
\begin{equation}\label{eqt:x_char}
\x=\x_n + \x_h, \hspace{1cm} \x_n \in \mathcal{N}(\T \F), \; \x_h \notin \mathcal{N}(\T \F).
\end{equation}
From \eqref{eqt:t_h} and \eqref{eqt:x_char}, we have
\begin{equation}\label{eqt:in}
 \x_h= -\left(\T\F \right)^\dagger \T \bvec(\bH).
\end{equation} The sufficient condition of IN is thus
\begin{equation}\label{eqt:pow_con}
  \|\x_h\|^2 = \|(\T \F)^\dagger \T \bvec(\bH) \|^2= (\T \bvec(\bH))^H \left( \T \F (\T \F)^H \right)^{-1} \T \bvec(\bH) \leq P_r^{max}.
\end{equation}
Once $\x_h$ is chosen according to  \eqref{eqt:in} and satisfy \eqref{eqt:pow_con}, we are free to choose $x_n\in \mathcal{N}(\T\F)$  as long as $\|\x\|^2 \leq P_r^{max}$. }

\textcolor{black}{
To see that \eqref{eqt:pow_con} is a necessary condition, we need to prove that if IN is feasible then \eqref{eqt:pow_con} must be true. We prove by contradiction and assume that IN is feasible and it is possible to find a solution $\x$ such that $\|\x\|^2 \leq P_r^{max}$ but $(\T \bvec(\bH))^H \left( \T \F (\T \F)^H \right)^{-1} \T \bvec(\bH) > P_r^{max}$. By assumption, there exist $\x$ such that $-\T \bvec(\bH)=\T \F \x$. We multiply both sides by $(\T \bvec(\bH))^H \left( \T \F (\T \F)^H \right)^{-1}$ and we have
\begin{equation}\label{eqt:step1a}
 (\T \bvec(\bH))^H \left( \T \F (\T \F)^H \right)^{-1} \T \bvec(\bH)  \overset{(a)}{=} -(\T \bvec(\bH) )^H \left( \T \F (\T \F)^H \right)^{-1} \T \F \x > P_r^{max}.
\end{equation} Since the product on the left of the equality $(a)$ is a real number, the product on the right side of $(a)$ must be a real number also. Thus, we can write
\begin{equation}\label{eqt:step2a}
 \x= -a (\T \F)^H \left( \T \F (\T \F)^H\right)^{-1} \T \bvec(\bH), \hspace{1cm}, a \in \mathbb{R}^{+}.
\end{equation} Substitute  \eqref{eqt:step2a} into  \eqref{eqt:step1a}, we have
\begin{equation}
 a (\T \bvec(\bH))^H \left( \T \F (\T \F)^H \right)^{-1} \T \bvec(\bH) > P_r^{max}.
\end{equation} By assumption, $(\T \bvec(\bH))^H \left( \T \F (\T \F)^H \right)^{-1} \T \bvec(\bH) > P_r^{max}$ and thus $a>1$. Computing the norm of $\x$, we have $\|\x\|^2= a^2 (\T \bvec(\bH))^H (\T \F (\T \F)^H)^{-1} \T \bvec(\bH)> P_r^{max}$ which violates the power constraint. Thus, $(\T \bvec(\bH))^H \left(\T \F (\T \F)^H\right)^{-1} \T \bvec(\bH) \leq P_r^{max} $ is a necessary and sufficient condition for interference neutralization.}

\subsection{Proof of Lemma \ref{lem:convex_no_in}}\label{app:convex_no_in}
\textcolor{black}{
In this section, we show that steps to obtain \eqref{eqt:convex_no_in} from \eqref{prob_pb}.
From  \eqref{eqt:sinr_no_in}, the desired signal power of $D_i$ 
\begin{equation}
  |h_{ii} + \g_{ir}^H \R \g_{ri}|^2 P_i = |h_{ii} + \left(\g_{ri} \otimes \g_{ir}^* \right)^T \bvec(\R)|^2 P_i
\end{equation}
which is due to $\ba^T \B \bc= \left( \bc \otimes \ba \right)^T \bvec(\B)$.
The interference and noise power of Tx-Rx pair $i$ is then
\begin{equation}
 \begin{aligned}
  & \sum_{l \neq i}^K |h_{il} + \left(\g_{rl} \otimes \g_{ir}^* \right)^T \bvec(\R)|^2 P_l + \|\g_{ir}^H \R \|^2 +1\\
&\overset{(a)}{=} \sum_{l \neq i}^K |h_{il} + \left(\g_{rl} \otimes \g_{ir}^* \right)^T \bvec(\R)|^2 P_l + \|\bvec(\R)^T \Z^T \left(\g_{ir}^* \otimes \I_{K} \right) \|^2 +1
 \end{aligned}
\end{equation}
where the manipulation (a) is due to $\g_{ir}^H \R= \bvec(\R^T)^T(\g_{ir}^* \otimes \I_{K})= \bvec(\R)^T\Z^T(\g_{ir}^* \otimes \I_{K})$ and the matrix $\Z$ which satisfies $\bvec(\R^T)=\Z \bvec(\R) $ is called a commutation matrix \cite[Section 9.2]{Lutkepohl1996}. $\Z$ has elements zeros and ones and $\Z=\Z^T$. Utilizing the Charnes-Cooper Transform \cite{Chalise2009, Zhang2009, Liao2011}, we let $\veta/t=\bvec(\R)$ where $t \in \mathbb{C}$ and $\bv=[\veta^T, t]^T \in \mathbb{C}^{(K^2+1) \times 1}$. The SINR of user $i$ is
\begin{equation}\label{eqt:sinr_no_in_app}
 \begin{aligned}
  \sinr_i &= \frac{|h_{ii}t + \left(\g_{ri} \otimes \g_{ir}^* \right)^T \veta|^2 P_i}{\sum_{l \neq i}^K |h_{il}t + \left(\g_{rl} \otimes \g_{ir}^* \right)^T \veta|^2 P_l + \|\veta^T \Z^T \left(\g_{ir}^* \otimes \I_{K} \right) \|^2 +|t|^2}\\
&\overset{(a)}{=} \frac{|h_{ii}t + \left(\g_{ri} \otimes \g_{ir}^* \right)^T \veta|^2 P_i}{\sum_{l \neq i}^K |h_{il}t + \left(\g_{rl} \otimes \g_{ir}^* \right)^T \veta|^2 P_l + \veta^H \left(\I_{K}  \otimes \g_{ir} \g_{ir}^H\right) \veta +|t|^2}\\
&= \frac{\bv^H \X_{i1} \bv}{\bv^H \X_{i2} \bv}
 \end{aligned}
\end{equation}
where (a) is due to  $\Z \left(\A \otimes \B \right) \Z= \B \otimes \A$ and 
\begin{equation}
\begin{aligned}
       \X_{i1}&=\left[\begin{array}{cc}
              \left( \g_{ri} \otimes \g_{ir}^*\right)^*\left( \g_{ri} \otimes \g_{ir}^*\right)^T & \left( \g_{ri} \otimes \g_{ir}^*\right)^T h_{ii}\\
h_{ii}^*\left( \g_{ri} \otimes \g_{ir}^*\right)^T & |h_{ii}|^2
             \end{array} \right] P_i,\\
\X_{i2} &=\sum_{l\neq i}^{K} \left[\begin{array}{cc}
              \left( \g_{rl} \otimes \g_{ir}^*\right)^*\left( \g_{rl} \otimes \g_{ir}^*\right)^T & \left( \g_{rl} \otimes \g_{ir}^*\right)^T h_{il}\\
h_{il}^*\left( \g_{rl} \otimes \g_{ir}^*\right)^T & |h_{il}|^2
             \end{array} \right] P_l + \left[\begin{array}{cc}
                    \I_{M} \otimes (\g_{ir} \g_{ir}^H) & \0_{M^2 \times 1}\\
		    \0_{1\times M^2}& 1
                   \end{array} \right].
\end{aligned}
\end{equation}
The power constraint at the relay is
\begin{equation}\label{eqt:pow_con_app}
\begin{aligned}
  \tr \left( \R \Q \R^H\right) =\bvec(\R)^H \left( \Q^T \otimes \I \right)\bvec(\R) & \leq P_r^{max}\\
 \Leftrightarrow \bv^H \left[ \begin{array}{cc}
                                                             \left( \Q^T \otimes \I_{M} \right) & \0_{M^2 \times 1}\\
							  \0_{1 \times M^2} & -P_r^{max}
                                                            \end{array}
                                                         \right]\bv & \leq 0
\end{aligned}
\end{equation}
From Eqt. \eqref{eqt:sinr_no_in_app} and \eqref{eqt:pow_con_app}, we obtain the formulation in Lemma \ref{lem:convex_no_in}.}

\subsection{Proof of Lemma \ref{lem:pb}}\label{app:pb}
\textcolor{black}{In this section, we show the steps to obtain \eqref{eqt:pb_pro1} from \eqref{def:PB_in}.}
We rewrite the noise power to the following
\begin{equation}\label{eqt:noise_pow}
\begin{aligned}
\g_{ir}^H \R \R^H \g_{ir}&= \g_{ir}^H \left( \G_{dr}^{-H} \bS \G_{rt}^{-1}\right) \left( \G_{dr}^{-H} \bS \G_{rt}^{-1}\right)^H \g_{ir}\\
&= \g_{ir}^H \G_{dr}^{-H} \bS \G_{rt}^{-1} \G_{rt}^{-H} \bS^H \G_{dr}^{-1} \g_{ir}\\
&\overset{(a)}{=} \bvec(\bS^T)^T \left(\G_{dr}^{-*} \g_{ir}^* \otimes \G_{rt}^{-1} \right) \left(\G_{dr}^{-*} \g_{ir}^* \otimes \G_{rt}^{-1} \right)^H \bvec(\bS^T)^*\\
&\overset{(b)}{=} \bvec(\bS)^T \Z \left(\G_{dr}^{-*} \g_{ir}^* \g_{ir}^T\G_{dr}^{-T} \otimes \G_{rt}^{-1}\G_{rt}^{-H} \right)  \Z \bvec(\bS)^*\\
&\overset{(c)}{=} \bvec(\bS)^T \left( \G_{rt}^{-1} \G_{rt}^{-H} \otimes \G_{dr}^{-*} \g_{ir}^*  \g_{ir}^T \G_{dr}^{-T}  \right) \bvec(\bS)^*\\
&\overset{(d)}{=} \bvec(\bS)^H \left( \G_{rt}^{-*} \G_{rt}^{-T} \otimes \G_{dr}^{-1} \g_{ir}  \g_{ir}^H \G_{dr}^{-H}  \right)  \bvec(\bS)\\
&= \bvec(\bS)^H \left( \G_{rt}^{-*} \G_{rt}^{-T} \otimes \e_i \e_i^T  \right)  \bvec(\bS).
\end{aligned}
\end{equation} The transition $(a)$ is due to the Kronecker product properties, $\mathbf{a}^T \mathbf{B}^T \mathbf{C}= \bvec(\mathbf{B})^T \left( \mathbf{a} \otimes \mathbf{C}\right)$ \cite{Brokes2011}. 
The commutation matrix $\Z$ of dimension $K \times K$ satisfies $\Z \bvec(\bS)= \bvec(\bS^T)$, \cite[Section 9.2]{Lutkepohl1996}. 
The transition $(b)$ uses such properties and $(\A_1 \otimes \B_1)(\A_2 \otimes \B_2)= \A_1 \A_2 \otimes \B_1 \B_2$. Then in transition $(c)$, we use the property $\Z \left(\A \otimes \B \right) \Z= \B \otimes \A$.
In transition $(d)$, we uses the fact that the noise energy is a real scalar and a complex conjugation does not affect its value. The last equality is due to that fact that $\g_{ir}$ is the $i$-th column of $\G_{dr}$ and $\G_{dr}^{-1} \g_{dr}= \left[\G_{dr}^{-1}\G_{dr}\right]_{(:,i)}=\e_i$.

 Now, we rewrite the SINR and power constraints as a function of $\bvec(\bS)$.
The signal power at $D_j$ is rewritten as
\begin{equation}\label{eqt:signal_pow}
 |h_{ii}+s_i|^2= |h_{ii}+ \e_i^T \bL \bvec(\bS)|^2 P_i
\end{equation}
where $\bL$ is a row selection matrix, $\s= \bL \bvec(\bS)$.
From \eqref{eqt:noise_pow} and \eqref{eqt:signal_pow}, the SINR  of $D_i$ is
\begin{equation}\label{eqt:sinr_vp}
 \sinr_i= \frac{|h_{ii}+ \e_i^T \bL \bvec(\bS)|^2 P_i}{\bvec(\bS)^H \B_{i2} \bvec(\bS) +1}.
\end{equation} Recall from \eqref{eqt:def_PB_in_pow} and \eqref{eqt:def_PB_in_in} that any feasible solution satisfies 
\textcolor{black}{
\begin{equation}\label{eqt:in_vp}
\left\{
\begin{aligned}
\bvec(\bS(\s))^H \tilde{\Q} \bvec(\bS(\s)) &\leq P_r^{max}\\
\underbrace{ 
         \T }_{\D_4} \bvec(\bS)= -\T \bvec(\mathbf{H}) &= \bb.
\end{aligned} \right.
\end{equation} } However, the second constraint \eqref{eqt:in_vp} creates complication to the optimization problem because of the asymmetric structure of $\D_4$. We here propose an equivalent constraint 
\begin{equation}\label{eqt:in_vp2}
 \| \D_4 \bvec(\bS) - \bb \|^2 = 0.
\end{equation}
From Eqt. \eqref{eqt:sinr_vp} and \eqref{eqt:in_vp} and let $\w=\bvec(\bS)$, \eqref{def:PB_in} is equivalent to the following formulation,
\begin{equation}\label{eqt:pb_pro}
\left\{  \begin{aligned}
  \max_{\w \in \mathbb{C}^{K^2 \times 1}} \hspace{1cm} & \frac{|h_{11}+ \e_1^T \bL \w|^2 P_1}{\w^H \B_{12} \w +1}\\
\mbox{s.t.} \hspace{1cm} & \frac{|h_{jj}+ \e_j^T \bL \w|^2 P_j}{\w^H \B_{j2} \w+1} \geq \gamma_j, \hspace{1cm} j=2,\ldots, K,\\
& \w^H \D_3 \w \leq P_r^{max},\\
& \|\D_4 \w -\bb\|^2 = 0.
 \end{aligned} \right.
\end{equation}

\textcolor{black}{In the following, we convert the optimization problem in  \eqref{eqt:pb_pro} into the standard QCQP. We proceed with the Charnes-Cooper transform \cite{Chalise2009, Zhang2009,Liao2011}} by
substituting the optimization variable $\hat{\w}=\w t$ for some complex scalar $t \neq 0$. The optimization problem is rewritten as
\begin{equation}
\left\{
 \begin{aligned}
 \max_{\hat{\w},t} \hspace{1cm} & \frac{\hat{\w}^H \left[\e_1^T \bL,   h_{11}\right]^H \left[\e_1^T \bL,   h_{11}\right] P_1\hat{\w}}{\hat{\w}^H \B_{12} \hat{\w} + t^2}\\
\mbox{s.t.} \hspace{1cm}& \frac{\hat{\w}^H \left[\e_j^T \bL,   h_{jj}\right]^H \left[\e_j^T \bL,   h_{jj}\right] P_j \hat{\w}}{\hat{\w}^H \B_{j2} \hat{\w} + t^2} \geq \gamma_j, \hspace{1cm} j=2,\ldots, K,\\
& \hat{\w}^H \D_3 \hat{\w}\leq t^2 P_r^{max},\\
& \| \D_4  \hat{\w} -\bb t\|^2 = 0.
 \end{aligned} \right.
\end{equation}
We denote $\B_{i1}=\left[\e_i^T \bL,   h_{ii}\right]^H \left[\e_i^T \bL,   h_{ii}\right] P_i, i=1,\ldots,K$. Without loss of generality, set the denominator of the objective function to one and define a new optimization parameter $\y= [\hat{\w}^T, t]^T$. We obtain \eqref{eqt:pb_pro1}
with \begin{equation}\label{eqt:B}
 \begin{aligned}
  \hat{\B}_{i1}&= \B_{i1}= \left[\begin{array}{cc}
                       \bL^T \e_i \e_i^T \bL & \bL^T \e_i \h_{ii}\\
			\h_{ii}^* \e_i^T \bL & |h_{ii}|^2
                      \end{array}
 \right] P_i,\\
\hat{\B}_{i2}&= \left[\begin{array}{cc}
                       \B_{i2} & \0_{K^2 \times 1}\\
		\0_{1 \times K^2} & 1
                      \end{array}
 \right]=\left[\begin{array}{cc}
                       \G_{rt}^{-*} \G_{rt}^{-T} \otimes \e_i \e_i^T 
  & \0_{K^2 \times 1}\\
			\0_{1 \times K^2} & 1
                      \end{array}
 \right],\\
\hat{\B}_{j}&= \hat{\B}_{j1}- \gamma_j \hat{\B}_{j2}= \left[\begin{array}{cc}
                        \bL^T \e_j \e_j^T \bL - \gamma_j \left(\G_{rt}^{-*} \G_{rt}^{-T} \otimes \e_j \e_j^T \right)  & \bL^T \e_j \h_{jj}\\
			\h_{jj}^* \e_j^T \bL & |h_{jj}|^2-\gamma_j
                      \end{array}
 \right], \\
\hat{\D}_{3}&= \left[\begin{array}{cc}
                       \D_{3} & \0_{K^2 \times 1}\\
			\0_{1 \times K^2} & -\frac{P_r}{P_t+1}
                      \end{array}
 \right]=  \left[\begin{array}{cc}
                       \G_{rt}^{-*} \G_{rt}^{-T} \otimes \G_{dr}^{-1} \G_{dr}^{-H} & \0_{K^2 \times 1}\\
			\0_{1 \times K^2} & -\frac{P_r}{P_t+1}
                      \end{array}
 \right] ,\\
\hat{\D}_{4}&= \left[\begin{array}{cc}
                       \D_4^H \D_4 & -\D_4^H \bb\\
			- \bb^H\B_{4} & \bb^H\bb
                      \end{array}
 \right]=\left[\begin{array}{cc}
                       \T^H \T & \T^H \T \bvec(\mathbf{H})\\
			\bvec(\mathbf{H})^H \T^H \T & \bvec(\mathbf{H}) \T^H\T \bvec(\mathbf{H})
                      \end{array}
 \right].
 \end{aligned}
\end{equation} Note that all above matrices above are Hermitian matrices.


\subsection{Proof of Theorem \ref{thm:opt_pow_no_in}}\label{app:opt_pow_no_in}
For any given relay matrix $\R$, we write the optimization in  \eqref{def:PB_no_in} as
\textcolor{black}{
\begin{subequations}
 \begin{align}
  \max_{\p\in \mathbb{R}_+^{K \times 1}} \hspace{0.3cm} & \frac{|h_{11} + \g_{1r}^H \R \g_{r1}|^2 P_1}{\sum_{l=1,l \neq 1}^{K}|h_{1l} + \g_{1r}^H \R \g_{rl}|^2  P_l + \|\g_{1r}^H \R \|^2 + 1}\\
\mbox{s.t. }\hspace{0.3cm} & \frac{|h_{jj} + \g_{jr}^H \R \g_{rj}|^2 P_j}{\sum_{l=1,l \neq j}^{K}|h_{jl} + \g_{jr}^H \R \g_{rl}|^2  P_l + \|\g_{jr}^H \R \|^2 + 1} \geq \gamma_j, \hspace{0.3cm} j=2,\ldots,K, \label{eqt:sinr}\\
&  \tr \left( \R \left( \sum_{l=1}^K \g_{rl} \g_{rl}^H P_l + \I \right)\R^H\right) \leq P_r^{max},\label{eqt:sum_pow}\\
& P_l \leq P_s^{max}, \hspace{0.3cm} l=1,\ldots,K. \label{eqt:pow_con2}
 \end{align} 
\end{subequations}
}
If the problem is feasible, at optimality \emph{all} the SINR constraints are settled in equality. To see this, denote the optimal power allocation as $\hat{\p}$ and assume that $\sinr_2(\hat{\p}) > \gamma_2$ and $\sinr_j(\hat{\p})=\gamma_j$ for $j=3,\ldots, K$. Note that $\sinr_2(\hat{\p})$ is monotonically increasing with $\hat{P}_2$ whereas $\sinr_1(\hat{\p})$ and $\sinr_j(\hat{\p})$ are monotonically decreasing with $\hat{P}_2$. Thus, the decreased value of $\hat{P}_2$ increases both $\sinr_1$ and $\sinr_j$, $j=3,\ldots,K$. On the other hand, the power constraints \eqref{eqt:sum_pow} and \eqref{eqt:pow_con2} and the $\sinr_2$ constraint remain valid. This contradicts to the assumption that $\hat{P}_2$ attains the optimal point. 

Since all $\sinr$ constraints are active at optimality, we write all $K-1$ constraints \eqref{eqt:sinr} in the following:
\begin{equation}
 \A \hat{\p} = \q
\end{equation}
where for $m=1,\ldots,K-1$, $l=1,\ldots,K$
\begin{equation}
 \begin{aligned}
   \left[ \A \right]_{ml}&= \left\{ \begin{array}{cc}
                                    |h_{(m+1)(m+1)} + \g_{(m+1)r}^H \R \g_{r(m+1)}|^2 & \mbox{if } m+1=l,\\
				     - \gamma_{m+1} |h_{(m+1)l} + \g_{(m+1)r}^H \R \g_{rl}|^2 & \mbox{if } (m+1) \neq l.
                                   \end{array}
\right.\\
\left[\q\right]_m &=  \gamma_{m+1} \left(  \|\g_{(m+1)r}^H \R \|^2 + 1 \right).
 \end{aligned}
\end{equation}
Note that the matrix $\A$ has dimension $(K-1) \times K$ and we denote the $i$-th column of $\A$ as $\ba_i$ and the second to last elements of $\hat{\p}$ as $\hat{\p}_{2:K}$. We have
\begin{equation}\label{eqt:step3}
 \hat{\p}_{2:K} = \left[\ba_2,\ldots,\ba_K \right]^{-1} \left(\q- \ba_1 \hat{P}_1\right).
\end{equation}
For the brevity of notations, we let $\left[\A \right]_{(:,2:K)}=\left[\ba_2,\ldots,\ba_K \right]$. Substitute into the power constraint \eqref{eqt:sum_pow} and denote $\bc^T=\left[\| \g_{r2}\|^2, \ldots, \|\g_{rK}\|^2 \right]$; we have
\begin{equation}\label{eqt:upper_bound1}
\begin{aligned}
 \| \g_{r1}\R\|^2 \hat{P}_1 + \bc^T \left[\A \right]_{(:,2:K)}^{-1} \left(\q- \ba_1 \hat{P}_1\right) &\leq P_r^{max}-\tr \left( \R \R^H \right) \\
\Leftrightarrow  \hat{P}_1 & \leq \frac{ P_r^{max}-\tr \left( \R \R^H \right) -  \bc^T \left[\A \right]_{(:,2:K)}^{-1} \q }{\left(\| \g_{r1} \R\|^2 - \bc^T\left[\A \right]_{(:,2:K)}^{-1}\ba_1 \right)}.
\end{aligned}
\end{equation}
From \eqref{eqt:step3} and \eqref{eqt:pow_con2}, we have
\begin{equation}
 \begin{aligned}
   \hat{\p}_{2:K} &= \left[\A \right]_{(:,2:K)}^{-1} \left(\q- \ba_1 \hat{P}_1\right) \leq P_s^{max} \1_{(K-1) \times 1}\\
\Leftrightarrow \ba_1 \hat{P}_1 &\geq \q- P_s^{max} \left[\A \right]_{(:,2:K)} \1_{(K-1) \times 1}
 \end{aligned}
\end{equation}
Note that $\ba_1=[-\gamma_2 |h_{21}+ \g_{2r}^H \R \g_{r1}|^2, \ldots, -\gamma_K |h_{K1}+ \g_{Kr}^H \R \g_{r1}|^2]^T$ and we have K-1 upper bounds of $P_1$ and for $1 \leq k \leq K-1$:
\begin{equation}\label{eqt:upper_bound2}
 \hat{P}_1\leq \frac{ P_s^{max} \sum_{j=2}^K [\A]_{kj} - [\q]_k}{[\A]_{k1}}.
\end{equation}

On the other hand, the objective function $\sinr_1$ is monotonically increasing with $\hat{P}_1$. To see this mathematically, we write
\begin{equation}\label{eqt:whatever}
 \begin{aligned}
  &\frac{|h_{11} + \g_{1r}^H \R \g_{r1}|^2 \hat{P}_1}{\sum_{l=1,l \neq 1}^{K}|h_{1l} + \g_{1r}^H \R \g_{rl}|^2  \hat{P}_l + \|\g_{1r}^H \R \|^2 + 1}\\
&= \frac{|h_{11} + \g_{1r}^H \R \g_{r1}|^2 \hat{P}_1}{ \bb^T  \hat{\p}_{2:K} + \|\g_{1r}^H \R \|^2 + 1}\\
&=\frac{|h_{11} + \g_{1r}^H \R \g_{r1}|^2 \hat{P}_1}{ \bb^T  \left[\A \right]_{(:,2:K)}^{-1} \left(\q- \ba_1 \hat{P}_1\right) + \|\g_{1r}^H \R \|^2 + 1}\\
&=\frac{|h_{11} + \g_{1r}^H \R \g_{r1}|^2 \hat{P}_1}{- \bb^T \left[\A \right]_{(:,2:K)}^{-1} \ba_1 \hat{P}_1 + \left(\bb^T  \left[\A \right]_{(:,2:K)}^{-1}\q + \|\g_{1r}^H \R \|^2 + 1\right)}\\
 \end{aligned}
\end{equation}
where $\bb^T=\left[ |h_{12} + \g_{1r}^H \R \g_{r2}|^2, \ldots, |h_{1K} + \g_{1r}^H \R \g_{rK}|^2 \right]$. The last equality is of the form $\frac{z_1 \hat{P}_1}{z_2 \hat{P_1}+ z_3}$ which can be manipulated as
\begin{equation}
 \frac{z_1 \hat{P}_1}{z_2 \hat{P_1}+ z_3}= \frac{z_1}{z_2} \frac{z_2 \hat{P}_1 +z_3 - z_3}{z_2\hat{P}_1 + z_3}= \frac{z_1}{z_2}\left(1- \frac{z_3}{z_2\hat{P}_1 + z_3} \right)
\end{equation}
which is indeed monotonically increasing with $\hat{P}_1$. From \eqref{eqt:upper_bound1} and \eqref{eqt:upper_bound2}, we have:
\begin{equation}
 \hat{P}_1 = \min \left( P_s^{max}, \frac{ P_r^{max}-\tr \left( \R \R^H \right) -  \bc^T \left[\A \right]_{(:,2:K)}^{-1} \q}{\left(\| \g_{r1} \R\|^2 - \bc^T\left[\A \right]_{(:,2:K)}^{-1}\ba_1 \right)}, \min_{k=1,\ldots,K} \left( \frac{ P_s^{max} \sum_{j=2}^K [\A]_{kj} - [\q]_k}{[\A]_{k1}} \right)\right).
\end{equation}

\subsection{Proof of Theorem \ref{thm:pow_in}}\label{app:pow_in}
 With the requirement of $\IN$, the interference from Tx $j$ at Rx $i$ is canceled and therefore the function $\sinr_i(\s,P_i)$ is independent to the transmit power from any other transmitters $j\neq i$. From \eqref{def:PB_in}, for any given $\s$, we have
\textcolor{black}{
\begin{subequations}
 \begin{align}
  \max_{\p\in \mathbb{R}_+^{K \times 1}} \hspace{0.3cm} & \frac{|h_{11} + \g_{1r}^H \R(\s) \g_{r1}|^2 P_1}{ \|\g_{1r}^H \R(\s) \|^2 + 1}\\
  \mbox{s.t. }\hspace{0.3cm} & \frac{|h_{jj} + \g_{jr}^H \R(\s) \g_{rj}|^2 P_j}{ \|\g_{jr}^H \R(\s) \|^2 + 1} \geq \gamma_j, \hspace{0.3cm} j=2,\ldots, K,\label{eqt:sinr_constraint}\\
   & \tr \left( \R \left( \sum_{l=1}^K \g_{rl} \g_{rl}^H P_l + \I \right)\R^H\right) \leq P_r^{max}, \label{eqt:pow_con3}\\
& P_l \leq P_s^{max}, \hspace{0.3cm} l=1,\ldots K.
 \end{align}
\end{subequations}} The $\IN$ constraint disappears from the above optimization problem because it is independent to $\p$. 
\textcolor{black}{Denote the optimal power allocation by $\bu=[u_1,\ldots,u_K]$. We can observe that 
SINR constraints \eqref{eqt:sinr_constraint} and power constraint \eqref{eqt:pow_con3} must be active at optimality.
Otherwise, let $\sinr_2>\gamma_2$. The value of $u_2$ can be decreased by a very small amount $\epsilon$ without violating \eqref{eqt:sinr_constraint} and $u_1$ can be increased by $\epsilon$ without violating \eqref{eqt:pow_con3}. This new $u_1$ increases the objective value and leads to contradiction that we are at the optimal point. 
From \eqref{eqt:sinr_constraint} and \eqref{eqt:pow_con3}, we have
\begin{equation}
 \left\{ \begin{aligned}
  u_j  & = \min \left( \frac{\gamma_j \left(\|\g_{jr}^H \R(\s) \|^2 + 1 \right)}{ |h_{jj} + \g_{jr}^H \R(\s) \g_{rj}|^2 }, P_s^{max} \right), \hspace{0.3cm} j=2,\ldots, K,\\
   \sum_{l=1}^K \|\g_{rl} \R \|^2 u_l &= P_r^{max}-\tr \left( \R(\s) \R(\s)^H \right).
 \end{aligned} \right. .
\end{equation}
Therefore, 
\begin{equation}
u_1 = \min \left( \frac{P_r^{max}-\tr \left( \R(\s) \R(\s)^H \right)}{ \|\g_{r1} \R(\s) \|^2 }  - \sum_{j=2}^K \frac{\|\g_{rj} \R(\s)\|^2}{\|\g_{r1}\R(\s) \|^2} u_j, P_s^{max} \right).
\end{equation}
}

\bibliographystyle{IEEEbib}
\bibliography{bib5} 

\end{document}